\newcommand{\addition}[1]{#1}
\newtheorem{theorem}{Theorem}[section]
\newtheorem{lemma}[theorem]{Lemma}
\keywords{high dimensional data; variable selection; Bayesian analysis; imprecise probability}
\begin{document}

\title{A robust Bayesian analysis of variable selection under prior ignorance}
\author{Tathagata Basu$^1$}
\email{tathagatabasumaths@gmail.com}
\author{Matthias C.~M.~Troffaes$^2$}
\email{matthias.troffaes@durham.ac.uk}
\author{Jochen Einbeck$^{2,3}$}
\email{jochen.einbeck@durham.ac.uk}
\address{$^1$UMR CNRS 7253 Heudiasyc, Universit\'{e} de Technologie de
Compi\`{e}gne}
\address{$^2$Department of Mathematical Sciences, Durham University}
\address{$^3$Durham Research Methods Centre}

\begin{abstract}
We propose a \addition{cautious} Bayesian variable selection routine \addition{by investigating} the
\addition{sensitivity of a} hierarchical model,
where the regression coefficients \addition{are specified by} spike and slab priors.
We exploit the use of latent variables to understand the importance of the co-variates.
These latent variables also allow us to obtain the size of the model space which
is an important aspect of high dimensional problems. 
\addition{In our approach,}
instead of fixing a single prior, we adopt a \addition{specific type of} robust
Bayesian \addition{analysis, where we consider a set of priors within the same parametric family}
to specify the selection probabilities of these latent
variables. \addition{We achieve that by considering} a set of expected \addition{prior} selection
probabilities, \addition{which allows us to perform a sensitivity analysis}
to understand the effect of prior elicitation on the variable
selection. The sensitivity analysis provides us sets of posteriors for the
regression coefficients \addition{as well as the selection indicators}
and \addition{we show
that the posterior odds of the model selection probabilities are monotone
with respect to the prior expectations of the selection probabilities.
We also} analyse synthetic and real life datasets
to illustrate our \addition{cautious variable selection method and compare
it with other well known methods}. 
\end{abstract}

\maketitle

\section{Introduction}

High dimensional modelling is a key issue in modern science and technology. 
In a regressional context, we consider a problem to be high dimensional if the 
number of co-variates present in the model is more than the total number of
observations. Let $y \coloneqq (y_1$, \dots, $y_n)^T$ denote the vector of
$n$ real valued responses and $\mathbf{x}\coloneqq$ $[\mathbf{x}_1$, \dots, $\mathbf{x}_n]^T$
denote the corresponding predictors where each $\mathbf{x}_i$ is a $p$-dimensional 
column vector. Then for a vector of regression coefficients $\beta\coloneqq(\beta_1$,
\dots, $\beta_p)^T$, we can define a linear model in the following way:
\begin{equation}\label{eq:lm}
    y = \mathbf{x}\beta + \epsilon
\end{equation}
where $\epsilon\coloneqq(\epsilon_1$, \dots, $\epsilon_n)^T$ is a vector of the
noises which are \addition{assumed to be} normally distributed. For high dimensional problems, $p>n$
leads to potential difficulties in the parameter estimation. Naturally,
we wish to perform a variable selection to overcome the issue.
That is, we want to estimate these regression coefficients so that only few
of them are non-zero (or, active). We aim to construct a Bayesian 
routine which is \addition{cautious} in variable selection and incorporates
prior information \addition{efficiently}.

Several works have been done on variable selection from both a frequentist and a
Bayesian point of view. The frequentist approaches are usually performed
by adding a penalty term to the log likelihood of a linear model. One such 
method is the least absolute shrinkage and selection operator or simply LASSO 
\cite{tib1996}. Despite being a popular method owing to its easy computation,
LASSO fails to satisfy several asymptotic properties for consistent variable selection.
This led to the development of several other methods for consistent variable selection. 
\citet{fan2001} investigated the  asymptotic properties for variable selection 
and introduced the smoothly clipped absolute deviation or simply SCAD. Later,
\citet{Zou2006} introduced the adaptive LASSO, a weighted version
of LASSO that gives asymptotically unbiased estimates. 

Variable selection problems are well investigated in a Bayesian context as well.
\citet{tib1996} suggested the use of the double exponential distribution as a natural
prior for the regression coefficients for variable selection. This led to several
Bayesian alternatives for LASSO. \citet{park2008} proposed a hierarchical model 
for the Bayesian LASSO. Lykou and Ntzoufras \cite{Lykou2013} developed a concept for 
specification  of the hyperparameters based on Bayes factors which evaluate the evidence for inclusion 
of the respective predictor variables. Later \citet{bhatta2015} proposed the Dirichlet 
LASSO using a global-local mixture of Gaussians to specify the regression 
coefficients.

\addition{In this paper,}
we are particularly interested in the spike and slab models
for variable selection. One of the earlier works on spike and slab prior specification
can be found in  \cite{george_bvs}. The authors used latent variables for variable
selection. Later \citet{Ishwaran_2005} formalised the notion of spike and
slab priors and proposed a continuous bimodal prior for hyper-variances 
to attain sparsity. They suggested the following formulation for spike and
slab priors:
\begin{align}
\beta\mid\sigma^2,z
&\sim N(\mathbf{0}_p,\sigma^2\mathbf{D}_{z})\\
\sigma^2 &\sim \pi_1\\
z&\sim \pi_2
\end{align}
where $\mathbf{D}_{z}$ is a $p\times p$ dimensional diagonal matrix such that 
the diagonal entries are $z\coloneqq\left(z_1,\cdots,z_p\right)$. The choice
of $\pi_1$ and $\pi_2$ ensures that these exclude
values of zero with probability 1. Later \citet{narisetty2014} proposed
a hierarchical framework based on this and provided
strong consistency properties for spike and slab models.

Most of the Bayesian variable selection methods are developed on the basis of
posterior contraction rates of the regression coefficients \addition{and} these 
contraction rates are often derived based on several assumptions on the 
design matrix and level of sparsity. However, high dimensional problems \addition{may not contain} the necessary information to perform a 
Bayesian analysis based on these assumptions. 
\addition{To overcome this, \citet{george_emp_bayes} suggested an
empirical Bayes approach to incorporate prior information for variable
selection. However, choosing a single prior based on this approach can also 
be problematic for high dimensional problems, as it might lead to
overfitting. To avoid that, we }
tackle this problem from a robust Bayesian \cite{BERGER1990303} point of
view \addition{and consider a set of priors, based on (multiple) prior elicitation(s). 
}

Robust Bayesian analysis was popularised by \citet{BERGER1990303}. In robust
Bayesian analysis, we consider a set of priors to capture prior
information in a careful manner so that it represents the prior uncertainty. 
The use of a set of priors results in a set
of posteriors instead of a single posterior. Several robust Bayesian 
\addition{approaches have been discussed} in the context of regression: 
robust Bayesian analysis for
linear regression by \citet{CHATURVEDI1996175}; the imprecise logit-normal model
by \citet{Bickis2009}; the multinomial logistic regression by \citet{Paton2015}, just to 
name a few. However, a robust Bayesian approach for high dimensional modelling is yet
to be proposed.

\addition{As hinted earlier, }in this article, we adopt a \addition{specific type of} robust Bayesian approach 
\addition{where we consider a set of priors within the same parametric family
and perform a sensitivity analysis over the possible values of the 
hyperparameters.} \addition{A convenient way to do such analysis is to specify} the level 
of sparsity through a set of expected
prior selection probabilities. For that, we consider \addition{a set of} beta
distributions for the prior selection probabilities and perform a sensitivity
analysis over the hyperparameters.
The sensitivity analysis allows us to understand the variability of the model
sparsity. We exploit the framework of \citet{narisetty2014}
to \addition{incorporate our sets of priors and perform the sensitivity analysis}.

The rest of the paper is organised as follows. In \cref{sec:model}, we describe
our hierarchical model for robust Bayesian analysis and discuss our motivation for
our choices of the hyperparameters.  \Cref{sec:post:orth} is focused on the
posterior computation for the orthogonal design case. We first discuss 
posterior distributions of the latent variables and a decision criterion for
variable selection. Next, we discuss the posterior distributions of the
regression coefficients along with their properties in the orthogonal design case. 
We then focus on the \addition{properties of the posteriors for the} general case 
and \addition{also} provide a Gibbs sampling framework for sampling from
posterior distributions in \cref{sec:post:general}. We illustrate our results
using both synthetic and real datasets in \cref{sec:illustrate}. Finally,
we conclude in \cref{sec:conc}.

\section{A Hierarchical Model}\label{sec:model} 
We start from the framework given by \citet{narisetty2014} to propose our
hierarchical model. Recall the linear model in \cref{eq:lm}. We assume 
$\epsilon_i\sim\mathcal{N}(0, \sigma^2)$ for $1\le i\le n$ to construct
our likelihood. Then for $1\le j \le p$ , we can formulate
the hierarchical model in the following way:
\begin{align}
    y\mid \beta, \sigma^2
    &\sim \mathcal{N}\left(\mathbf{x}\beta, \sigma^2\mathbf{I}_{n}\right)\\
    \beta_j\mid \gamma_j = 1, \sigma^2 
    &\sim \mathcal{N}(0, \sigma^2\tau_1^2)\label{eq:prior:beta:1}\\
    \beta_j\mid \gamma_j = 0, \sigma^2
    &\sim \mathcal{N}(0, \sigma^2\tau_0^2)\label{eq:prior:beta:0}\\
    \gamma_j\mid q_j &\sim \mathrm{Ber}(q_j)\\
    q_j &\sim \mathrm{Beta}(s\alpha_j, s (1 - \alpha_j))\\
    \sigma^2 &\sim \text{InvGamma}(a, b)
\end{align}
where $s,\alpha_j,a,b>0$ are fixed constants. We fix $\alpha_j$ in a sense
that this is not random in nature; however, we perform a sensitivity analysis
over this $\alpha_j$.

The latent variables $\gamma\coloneqq(\gamma_1,\cdots,\gamma_p)$ in the model correspond
to the spike and slab prior specification routine where each $\gamma_j$ acts
as a selection indicator for the $j$-th co-variate $\mathbf{x}_j$. We also fix
the scale parameters $\tau_0$ and $\tau_1$ to represent our spike and slab
model. We consider a sufficiently small value of $\tau_0$
$(1\gg\tau_0>0)$ so that  $\beta_j|\gamma_j=0$ has its probability mass 
concentrated around zero. Therefore the probability distribution of 
$\beta_j|\gamma_j=0$ represents the spike component of our prior specification. 
To construct the slab component, we consider $\tau_1$ to be large 
($\tau_1>1$). A large value of $\tau_1$ allows the probability distribution
to capture the non-zero effects of $\beta_j$. The scale parameter $\tau_1$ also 
allows us to express our prior belief about the plausible values of $\beta_j$. 

In our model, the tail of the distribution of $\beta_j$ is also dependent on the prior 
expectation of selection probability $\alpha_j$. 
Let $f_{\gamma_j}(\beta_j)$ be the density of $\beta_j \mid 
\gamma_j$ as mentioned in \cref{eq:prior:beta:1} and \cref{eq:prior:beta:0}. So,
\begin{equation}
    f_{\gamma_j}(\beta_j) \coloneqq \frac{1}{\sqrt{2\pi\sigma^2\tau_1^{2\gamma_j}\tau_0^{2(1-\gamma_j)}}}
    \exp\left(-\frac{\beta_j^2}{2\sigma^2\tau_1^{2\gamma_j}\tau_0^{2(1-\gamma_j)}}\right).
\end{equation}
Then the hierarchical model implies the following:
\begin{align}
    P(\beta_j\mid \sigma^2) 
    &= \sum_{\gamma_j}P(\beta_j\mid\gamma_j,\sigma^2)\left(\int 
    P(\gamma_j\mid q_j) P(q_j)dq_j\right)\\
    &=\sum_{\gamma_j}[f_{1}(\beta_j)]^{\gamma_j}
    [f_{0}(\beta_j)]^{1-\gamma_j}
    \left(\int q_j^{\gamma_j}(1-q_j)^{1-\gamma_j}P(q_j) dq_j\right)\\
    &=\sum_{\gamma_j}[\alpha_j f_{1}(\beta_j)]^{\gamma_j}
    [(1-\alpha_j)f_{0}(\beta_j)]^{1-\gamma_j}\\
    &= \alpha_jf_1(\beta_j)+ (1-\alpha_j)f_0(\beta_j).\label{eq:beta:prior:alpha}
\end{align}
\begin{figure}
    \centering
    \includegraphics[width = 0.75\linewidth]{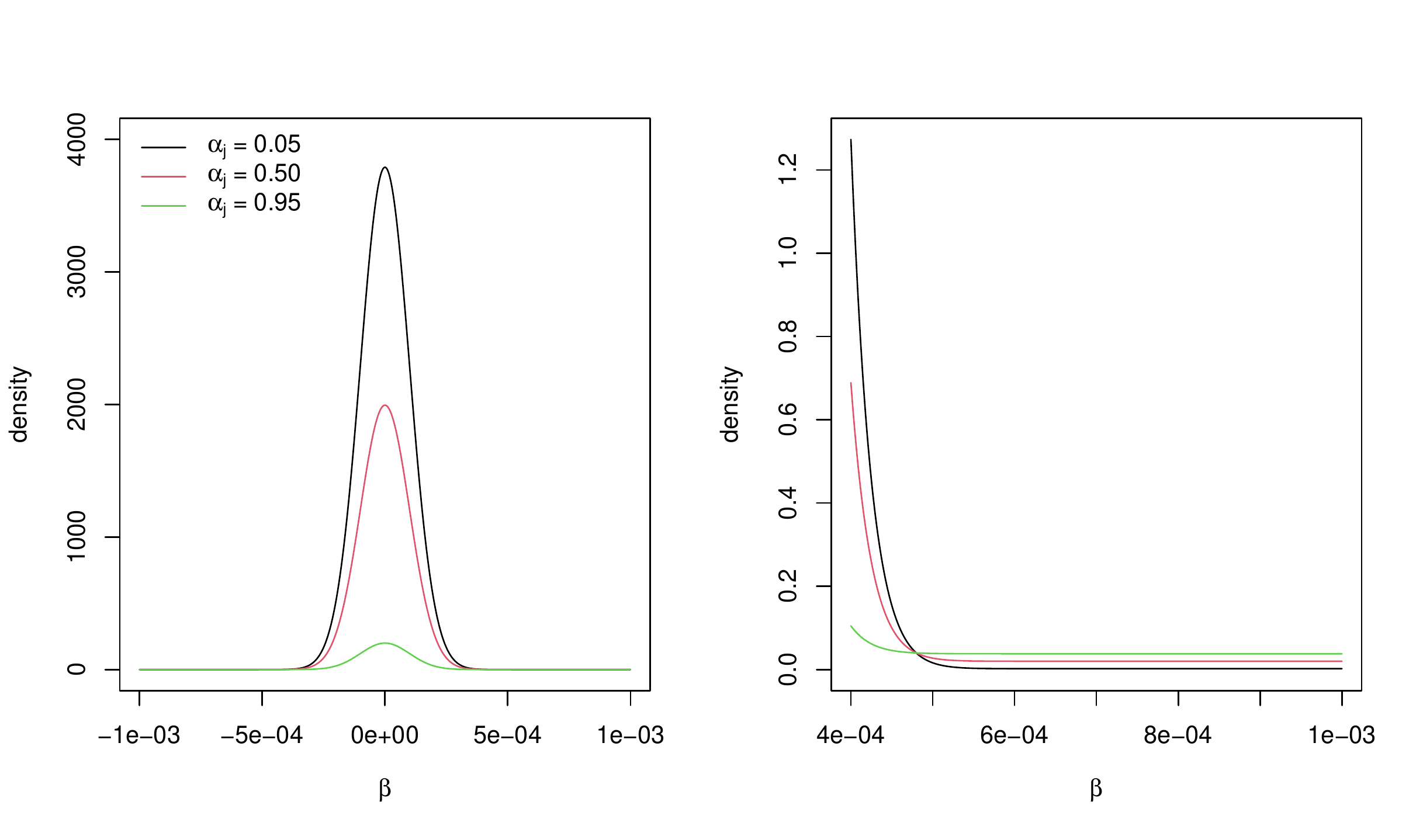}
    \caption{Marginalised densities of $\beta_j$ (\cref{eq:beta:prior:alpha}) for different
    values of $\alpha_j$. The figure on the right side shows the tails of the distributions.}
    \label{fig:prior:beta}
\end{figure}
That is, we can express our prior on $\beta_j$ as a mixture of normal distributions
where the weights are the prior expectation of the selection probability.
In \cref{fig:prior:beta} we show the effect of $\alpha_j$ on the prior
specification of $\beta$ for fixed $\tau_0=10^{-4}, \tau_1 = 10$ and $\sigma = 1$.
We notice that smaller values of $\alpha_j$ forces the prior 
to be more concentrated around 0 whereas higher values of $\alpha_j$
result to a flatter prior. This also suggests that we can impose our
prior belief on $\beta_j$ through $\alpha_j$. We can assign a sufficiently 
large value for $\tau_1$ to capture the prior expected range of $\beta_j$ and vary $\alpha_j$
to control the tail of the marginalised probability distribution.

\subsection{Robust Bayesian Analysis}
In this article, we are interested in a robust Bayesian analysis \cite{BERGER1990303}. 
In robust Bayesian analysis, we consider a set of priors instead of a single prior.
\addition{This set} can be constructed in different ways with focus on capturing all 
possible prior information efficiently. There are several \addition{applications of robust
Bayesian analysis}. One such use of robust Bayesian analysis for linear regression can be
found in \cite{CHATURVEDI1996175} where $\epsilon$-contaminated priors are used 
for the regression coefficients. In our case, we are interested in variable 
selection \addition{problems and therefore we require a setup which
allows us to select co-variates based on a robust decision rule.} 

From \cref{eq:beta:prior:alpha} and \cref{fig:prior:beta}, we observe the
effect of $\alpha_j$ in capturing the non-zero effects of $\beta$. 
Besides this, $\alpha_j$ reflects our prior information about the importance
of the $j$-th co-variate. However, for high dimensional problems, extracting
this information is difficult as the problem appears with very limited information.
This motivates us to perform a sensitivity analysis over $\alpha_j$ and
\addition{therefore,} we use a set of beta priors to specify the selection probability
$q_j$ such that for a fixed $s(>0)$, 
\begin{equation}
    \mathcal{Q}_j = \left\{\text{Beta}(s\alpha_j, 
    s(1-\alpha_j)) : \alpha_j\in \mathcal{P}_j\right)\}.
\end{equation}
The set $\mathcal{P}_j$ represents our prior information on the parameter $\alpha_j$
where $\mathcal{P}_j$ is any subset of $(0,1)$. This setting allows us to incorporate 
prior information in two different ways. We can consider specific $\mathcal{P}_j$
for individual $\alpha_j$ based on our prior information about the selection of
the $j$-th co-variate, or we may consider an equiprobable
setting where we assume $\alpha_1=\alpha_2=\cdots=\alpha_p$. 
For the equiprobable case, where we have no prior information about the problem, 
we may consider a near vacuous set for the
elicitation of each $\alpha_j$. That is, we consider
$\alpha_j\in [\epsilon_1, 1-\epsilon_2]$ where $1\gg\epsilon_1,\epsilon_2>0$.  
Alternatively, we can say that the prior expectation of the total number of active 
co-variates lies between $p\epsilon_1$ and $p(1-\epsilon_2)$.

\addition{\subsection{Co-variate Selection}\label{sec:co-var:sel}
The major aspect of the selection indicators $\gamma$ is to perform co-variate selection.
The selection indicators form a $2^p$ dimensional model space and ideally, we 
want to select the most probable model. However, this is extremely expensive
for high dimensional problems. \citet{george_bvs} suggested the use of median 
probability as a threshold for variable selection. That is, a co-variate is 
considered active if $P(\gamma_j\mid y) > 0.5$. Later \citet{barbieri2004}
showed that this threshold of 0.5 results in the optimal predictive model
and they refer to this model as median probability model.
The median probability for co-variate selection can be easily modified in terms
of the posterior odds and we can consider variables to be active if their posterior 
odds are greater than 1.

In our case, we have a set of posteriors for $\gamma_j$ instead of a single
posterior. Therefore, we need a slightly different decision criterion for 
co-variate selection. To propose a decision rule, we adapt the notion of median 
probability model with a stronger condition. We consider a co-variate to be 
inactive when 
\begin{align}
    \sup_{\alpha\in\mathcal{P}}
    \left\{\frac{P\left(\gamma_j = 1\mid y;\alpha\right)}
    {P\left(\gamma_j = 0\mid y\right)}\right\}
    & < 1, 
\end{align}
where $\mathcal{P}\coloneqq \mathcal{P}_1\times\cdots\times\mathcal{P}_p$. 
Similarly, we consider them active if,
\begin{align}
    \inf_{\alpha\in\mathcal{P}}
    \left\{\frac{P\left(\gamma_j = 1\mid y;\alpha\right)}
    {P\left(\gamma_j = 0\mid y\right)}\right\}
    & > 1. 
\end{align}
Note that here $\alpha$ is treated as a varying constant and not as 
a random variable.

Due to our stronger condition for variable selection, we may have some variables
which do not satisfy either of the above conditions. We call these variables
indeterminate variables. This way, we obtain a cautious variable selection 
paradigm. This also allows us to comment on the sensitivity of
the variables in our models using the posterior expectations as some of the variables will be
indeterminate which shows that their inclusion is dependent on the prior specification. 
}

\section{Posterior for Orthogonal Design}\label{sec:post:orth}
Our proposed hierarchical model allows us to obtain closed form expressions
for the posterior distributions of the regression coefficients and the
latent variables for the orthogonal design case that is when 
$\mathbf{x}^T\mathbf{x} = n \mathbf{I}_p$. For this assumption on the 
design matrix, we have $\hat{\beta} = \mathbf{x}^Ty/n$, where
$\hat{\beta}\coloneqq(\hat{\beta}_1$, \dots, $\hat{\beta}_p)^T$ are the
ordinary least squares estimates. Then,
\begin{align}
    P(y\mid \beta, \sigma^2)
    &=\frac{1}{\sqrt{(2\pi\sigma^2)^n}}
    \exp\left(-\frac{1}{2\sigma^2}\|y-\mathbf{x}\beta\|^2_2\right)\\
    &=\frac{1}{\sqrt{(2\pi\sigma^2)^n}}
    \exp\left(-\frac{1}{2\sigma^2}
    \left(n\beta^T\beta-2n\beta^T\hat{\beta} + y^Ty
    \right)\right)\\
    &\addition{=\frac{1}{\sqrt{(2\pi\sigma^2)^n}}
    \exp\left(-\frac{n\|\beta - \hat{\beta}\|_2^2}
    {2\sigma^2}\right)
    \exp\left(-\frac{y^Ty - n\hat{\beta}^T\hat{\beta}}
    {2\sigma^2}\right)}.\label{eq:likelihood}
\end{align}
Let $\gamma\coloneqq(\gamma_1$, \dots, $\gamma_p)$ and 
$q\coloneqq (q_1$, \dots, $q_p)$.
The joint posterior of the proposed hierarchical model can be computed
in the following way:
\begin{equation}
    P(\beta, \sigma^2, \gamma, q\mid y)\propto
    P(y\mid \beta, \sigma^2) P(\beta\mid\gamma, \sigma^2)
    P(\gamma\mid q)P(q)P(\sigma^2).\label{eq:joint:post}
\end{equation}
To analyse the properties of the posterior distributions of $\beta$ and $\gamma$,
we consider $\sigma^2$ to be known and fixed. These assumption on the design matrix
and variance allow us to show an interesting relationship between the posterior of the
selection indicators and the posterior of the regression coefficients. To show this
relationship, we first investigate the posterior of the latent variables and propose
a decision criterion for co-variate selection.

\subsection{Selection indicators}
Using \cref{eq:joint:post}, we write the posterior of $\gamma$ as
\begin{align}
    P(\gamma\mid y)
    &= \iint P(\beta, \gamma, q \mid y) dqd\beta\\
    &\overset{\gamma}{\propto} \int P(y\mid  \beta)
    \left(P(\beta\mid\gamma)\int P(\gamma\mid q)P(q) dq
    \right)d\beta.\label{eq:post:gamma}
\end{align}
\addition{Here, to avoid ambiguity we use the notation $\overset{\gamma}{\propto}$
which means that left hand side is equal to the right hand side up to a multiplicative 
constant which does not depend on $\gamma$. }

\addition{Now,} since $P(\gamma_j\mid q_j)=q_j^{\gamma_j}(1-q_j)^{1-\gamma_j}$ and $q_j$ follows a
$\mathrm{Beta}$ distribution,
\begin{align}
    &P(\beta\mid\gamma)\int P(\gamma\mid q)P(q) dq\nonumber\\
    &=\prod_{j}\left([f_{1}(\beta_j)]^{\gamma_j}
    [f_{0}(\beta_j)]^{1-\gamma_j}
    \int q_j^{\gamma_j}(1-q_j)^{1-\gamma_j}P(q_j) dq_j\right)\\
    &=\prod_{j}\left([\alpha_j f_{1}(\beta_j)]^{\gamma_j}
    [(1-\alpha_j)f_{0}(\beta_j)]^{1-\gamma_j}\right).
    \label{eq:joint:beta:gamma}
\end{align}
Combining \cref{eq:likelihood}, \cref{eq:post:gamma} and \cref{eq:joint:beta:gamma} 
we have,
\begin{align}
    P(\gamma\mid y)
    &\overset{\gamma}{\propto} \bigintssss \exp\left(-\frac{n\|\beta - \hat{\beta}\|_2^2}
    {2\sigma^2}\right)
    \prod_{j}\left([\alpha_j f_{1}(\beta_j)]^{\gamma_j}
    [(1-\alpha_j)f_{0}(\beta_j)]^{1-\gamma_j}\right)d\beta\\
    &\overset{\gamma}{\propto} \bigintssss
    \prod_{j}\left(\exp\left(-\frac{n(\beta_j - \hat{\beta}_j)^2}{2\sigma^2}\right)
    [\alpha_j f_{1}(\beta_j)]^{\gamma_j}
    [(1-\alpha_j)f_{0}(\beta_j)]^{1-\gamma_j}\right)d\beta\\
    &\overset{\gamma}{\propto} \prod_{j}\left(\bigintssss
    \exp\left(-\frac{n(\beta_j - \hat{\beta}_j)^2}{2\sigma^2}\right)
    [\alpha_j f_{1}(\beta_j)]^{\gamma_j}
    [(1-\alpha_j)f_{0}(\beta_j)]^{1-\gamma_j}d\beta_j\right)
\end{align}
This shows that the $\gamma_j$'s are a posteriori independent and we can write
the posterior of $\gamma_j$ in the following way:
\begin{align}
    P(\gamma_j\mid y)
    &=M_j\bigintssss 
    \exp\left(-\frac{n(\beta_j - \hat{\beta}_j)^2}{2\sigma^2}\right)
    [\alpha_j f_{1}(\beta_j)]^{\gamma_j}
    [(1-\alpha_j)f_{0}(\beta_j)]^{1-\gamma_j}d\beta_j,
\end{align}
where $M_j$ is a normalisation constant independent of $\gamma_j$. Then we have,
\begin{align}
    P(\gamma_j = 1\mid y) 
    & = M_j\alpha_j
    \bigintssss \exp\left(-\frac{n(\beta_j-\hat{\beta_j})^2}{2\sigma^2}\right)
    f_{1}(\beta_j)d\beta_j.
\end{align}
To simplify the above expression, we first propose the following lemma.
\begin{lemma}\label{lem:1}
For $k\in\{0,1\}$ and $j\in\{1,\cdots,p\}$ we have 
\begin{equation}
    \exp\left(-\frac{n(\beta_j-\hat{\beta_j})^2}{2\sigma^2}\right)f_{k}(\beta_j)
    =w_{k,j}\frac{1}{\sqrt{2\pi}\sigma_k}
    \exp\left(-\frac{\left(\beta_j - \hat{\beta}_{k,j}\right)^2}{2\sigma_k^2}\right)
    \label{eq:simplified}
\end{equation}
where, $\hat{\beta}_{k,j}\coloneqq\frac{n\tau_k^2\hat{\beta_j}}{n\tau_k^2 +1}$, 
$\sigma_k^2\coloneqq\frac{\sigma^2\tau_k^2}{n\tau_k^2 +1}$ and
$w_{k,j}\coloneqq \frac{1}{\sqrt{n\tau_k^2+1}}
\exp\left(-\frac{n\hat{\beta_j^2}}{2(n\sigma^2\tau_k^2+\sigma^2)}\right)$.
\end{lemma}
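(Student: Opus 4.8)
The plan is to recognise both factors on the left-hand side as Gaussian kernels in $\beta_j$ and merge them by completing the square, which is the standard way of seeing that a product of two Gaussians is again a scaled Gaussian. First I would substitute the explicit form of the density from its definition, namely $f_k(\beta_j) = (2\pi\sigma^2\tau_k^2)^{-1/2}\exp(-\beta_j^2/(2\sigma^2\tau_k^2))$, and place the two exponents over the common denominator $2\sigma^2$. This reduces the entire problem to rewriting the single quadratic $n(\beta_j - \hat{\beta}_j)^2 + \beta_j^2/\tau_k^2$ in the variable $\beta_j$.

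Next I would expand this quadratic and group terms by powers of $\beta_j$, so that the coefficient of $\beta_j^2$ becomes $(n\tau_k^2 + 1)/\tau_k^2$. Completing the square then produces a perfect square $((n\tau_k^2+1)/\tau_k^2)(\beta_j - \hat{\beta}_{k,j})^2$ together with a remainder independent of $\beta_j$. The minimiser is exactly $\hat{\beta}_{k,j} = n\tau_k^2\hat{\beta}_j/(n\tau_k^2+1)$, and dividing the leading coefficient by $2\sigma^2$ yields $1/(2\sigma_k^2)$ with $\sigma_k^2 = \sigma^2\tau_k^2/(n\tau_k^2+1)$, which identifies both the mean and the variance of the resulting Gaussian.

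It then remains to collect the $\beta_j$-independent leftovers into the constant $w_{k,j}$. The remainder of the completed square contributes the factor $\exp(-n\hat{\beta}_j^2/(2(n\sigma^2\tau_k^2+\sigma^2)))$, after simplifying $n\hat{\beta}_j^2 - n^2\tau_k^2\hat{\beta}_j^2/(n\tau_k^2+1) = n\hat{\beta}_j^2/(n\tau_k^2+1)$ and dividing by $2\sigma^2$. Finally I would verify that the normalising prefactors agree: the factor $(2\pi\sigma^2\tau_k^2)^{-1/2}$ on the left must equal $(n\tau_k^2+1)^{-1/2}(2\pi)^{-1/2}\sigma_k^{-1}$ on the right, which follows immediately once $\sigma_k$ is substituted and the $(n\tau_k^2+1)^{-1/2}$ cancels against $\sigma_k^{-1}$.

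I do not expect any genuine obstacle, since the statement is a purely algebraic identity and the whole argument is completing the square. The only step requiring care is the bookkeeping of the normalising constants, making sure that the $(n\tau_k^2+1)^{-1/2}$ appearing inside $w_{k,j}$ is precisely what reconciles the prefactor $(2\pi\sigma^2\tau_k^2)^{-1/2}$ with $(2\pi)^{-1/2}\sigma_k^{-1}$; confirming this cancellation is where an algebraic slip is most likely.
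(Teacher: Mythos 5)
Your proposal is correct: the paper omits the proof of this lemma as ``straightforward,'' and the completing-the-square argument you outline is precisely the intended one. All three checks go through -- the quadratic coefficient $(n\tau_k^2+1)/\tau_k^2$ gives $\sigma_k^2$, the remainder $n\hat{\beta}_j^2/(n\tau_k^2+1)$ divided by $2\sigma^2$ gives the exponential factor in $w_{k,j}$, and the prefactor $(2\pi\sigma^2\tau_k^2)^{-1/2}$ equals $(n\tau_k^2+1)^{-1/2}(2\pi)^{-1/2}\sigma_k^{-1}$ -- so there is nothing to add.
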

\begin{proof}
\addition{The proof is straightforward and therefore has been omitted.}
\end{proof}
Now, using \cref{lem:1} we have
\begin{equation}
    P(\gamma_j = 1\mid y) =M_j\alpha_jw_{1,j}
\end{equation}
and
\begin{equation}
    P(\gamma_j = 0\mid y) 
    = M_j(1-\alpha_j)w_{0,j}.
\end{equation}
Therefore, $\gamma_j$ follows a Bernoulli distribution such that,
\begin{equation}
    \gamma_j \mid y \sim
    Ber\left(\frac{\alpha_jw_{1,j}}{\alpha_jw_{1,j}+(1-\alpha_j)w_{0,j}}\right)
    \label{eq:post:gamma:3}.
\end{equation}

\subsubsection{Properties of the posterior}
\addition{From the co-variate selection rules defined in \cref{sec:co-var:sel}, 
we can show that a variable is inactive when,}
\begin{equation}
    \sup_{\alpha_j\in\mathcal{P}_j}
    \left\{\frac{w_{1,j}\alpha_j}
    {w_{0,j}(1-\alpha_j)}\right\} < 1.\label{eq:non:active}
\end{equation}
Similarly, we consider a co-variate to be active if,
\begin{equation}
    \inf_{\alpha_j\in\mathcal{P}_j}
    \left\{\frac{w_{1,j}\alpha_j}
    {w_{0,j}(1-\alpha_j)}\right\} > 1.\label{eq:active:strong}
\end{equation}

\addition{We can also see from \cref{eq:non:active} and \cref{eq:active:strong} that} 
the posterior odds are monotone with respect to $\alpha_j$ and
the posterior odds increase as we increase the value of $\alpha_j$.
Therfore, we only need to compute the posterior odds on the lower and upper 
limits of the set instead of the whole interval. \addition{For instance,
for the near vacuous case,}
\begin{align}
    \sup_{\alpha_j\in[\epsilon_1, 1-\epsilon_2]}
    \left\{\frac{w_{1,j}\alpha_j}
    {w_{0,j}(1-\alpha_j)}\right\} = 
    \frac{(1-\epsilon_2)}{\epsilon_2}\cdot \frac{w_{1,j}}{w_{0,j}}\\
    \intertext{and,}
    \inf_{\alpha_j\in[\epsilon_1, 1-\epsilon_2]}
    \left\{\frac{w_{1,j}\alpha_j}
    {w_{0,j}(1-\alpha_j)}\right\} = 
    \frac{\epsilon_1}{(1-\epsilon_1)}\cdot \frac{w_{1,j}}{w_{0,j}}.
\end{align}
Therefore, a co-variate is considered to be active if 
$\frac{\epsilon_1}{(1-\epsilon_1)}\cdot \frac{w_{1,j}}{w_{0,j}}>1$
and a co-variate is considered to be inactive if
$\frac{(1-\epsilon_2)}{\epsilon_2}\cdot \frac{w_{1,j}}{w_{0,j}}<1$.

\subsection{Regression coefficients}
The joint posterior of regression coefficients ie $\beta$ is given by:
\begin{align}
    P(\beta\mid y)
    &= \sum_{\gamma}\int P(\beta, \gamma, q \mid y) 
    dq\\
    &\overset{\beta}{\propto} \sum_{\gamma}\int P(y\mid \beta) 
    P(\beta\mid\gamma)P(\gamma\mid q)P(q) dq\\
    &\overset{\beta}{\propto} P(y\mid \beta)
    \sum_{\gamma}\left(P(\beta\mid\gamma)\int P(\gamma\mid q)P(q) dq
    \right).\label{eq:post:beta}
\end{align}
From \cref{eq:joint:beta:gamma} we have
\begin{align}
    P(\beta\mid\gamma)\int P(\gamma\mid q)P(q) dq
    =\prod_{j}\left([\alpha_j f_{1}(\beta_j)]^{\gamma_j}
    [(1-\alpha_j)f_{0}(\beta_j)]^{1-\gamma_j}\right).
\end{align}
Then we can write \cref{eq:post:beta} as
\begin{align}
    P(\beta\mid y)
    &\overset{\beta}{\propto} P(y\mid \beta)
    \sum_{\gamma}\left(\prod_{j}\left([\alpha_j f_{1}(\beta_j)]^{\gamma_j}
    [(1-\alpha_j)f_{0}(\beta_j)]^{1-\gamma_j}\right)
    \right)\nonumber.
\end{align}
Therefore swapping sum and product operations we get,
\begin{align}
    P(\beta\mid y)
    &\overset{\beta}{\propto} P(y\mid \beta)
    \prod_{j}\sum_{\gamma_j}\left([\alpha_j f_{1}(\beta_j)]^{\gamma_j}
    [(1-\alpha_j)f_{0}(\beta_j)]^{1-\gamma_j}\right)\\
    &\overset{\beta}{\propto} P(y\mid \beta)
    \prod_{j}[\alpha_jf_1(\beta_j)+ (1-\alpha_j)f_0(\beta_j)].
    \label{eq:post:beta:2}
\end{align}
Now combining \cref{eq:likelihood} and \cref{eq:post:beta:2} we have
\begin{align}
    P(\beta\mid y)
    &\overset{\beta}{\propto} \exp\left(-\frac{1}{2\sigma^2}
    \left(n\beta^T\beta-2n\beta^T\hat{\beta}\right)\right)
    \prod_{j}[\alpha_jf_1(\beta_j)+ (1-\alpha_j)f_0(\beta_j)]\nonumber\\
    &\overset{\beta}{\propto} \exp\left(-\frac{n}{2\sigma^2}
    \|\beta - \hat{\beta}\|_2^2\right)
    \prod_{j}[\alpha_jf_1(\beta_j)+ (1-\alpha_j)f_0(\beta_j)]\\
    &\overset{\beta}{\propto} \prod_{j}\exp\left(-\frac{n(\beta_j - \hat{\beta}_j)^2}{2\sigma^2}\right)
    [\alpha_jf_1(\beta_j)+ (1-\alpha_j)f_0(\beta_j)].\label{eq:post:beta:prod}
\end{align}
Therefore, the $\beta_j$'s are a posteriori independent and for
each $1\le j\le p$, we have,
\begin{align}
    P(\beta_j\mid y)
    &\overset{\beta_j}{\propto} \exp\left(-\frac{n(\beta_j - \hat{\beta}_j)^2}{2\sigma^2}\right)
    [\alpha_jf_1(\beta_j)+ (1-\alpha_j)f_0(\beta_j)].\label{eq:post:beta:j}
\end{align}
Let $W_j \coloneqq \alpha_jw_{1,j}+ (1-\alpha_j)w_{0,j}$. 
Then combining \cref{eq:post:beta:j} and \cref{eq:simplified} we have,
\begin{align}
    \beta_j \mid y
    &\sim\frac{\alpha_jw_{1,j}}{W_j}
    \mathcal{N}\left(\hat{\beta}_{1,j}, \sigma_1^2\right)
    +\frac{(1-\alpha_j)w_{0,j}}{W_j}
    \mathcal{N}\left(\hat{\beta}_{0,j}, \sigma_0^2\right).
    \label{eq:post:beta:mixture}
\end{align}
\begin{figure}
    \centering
    \includegraphics[width = 0.75\linewidth]{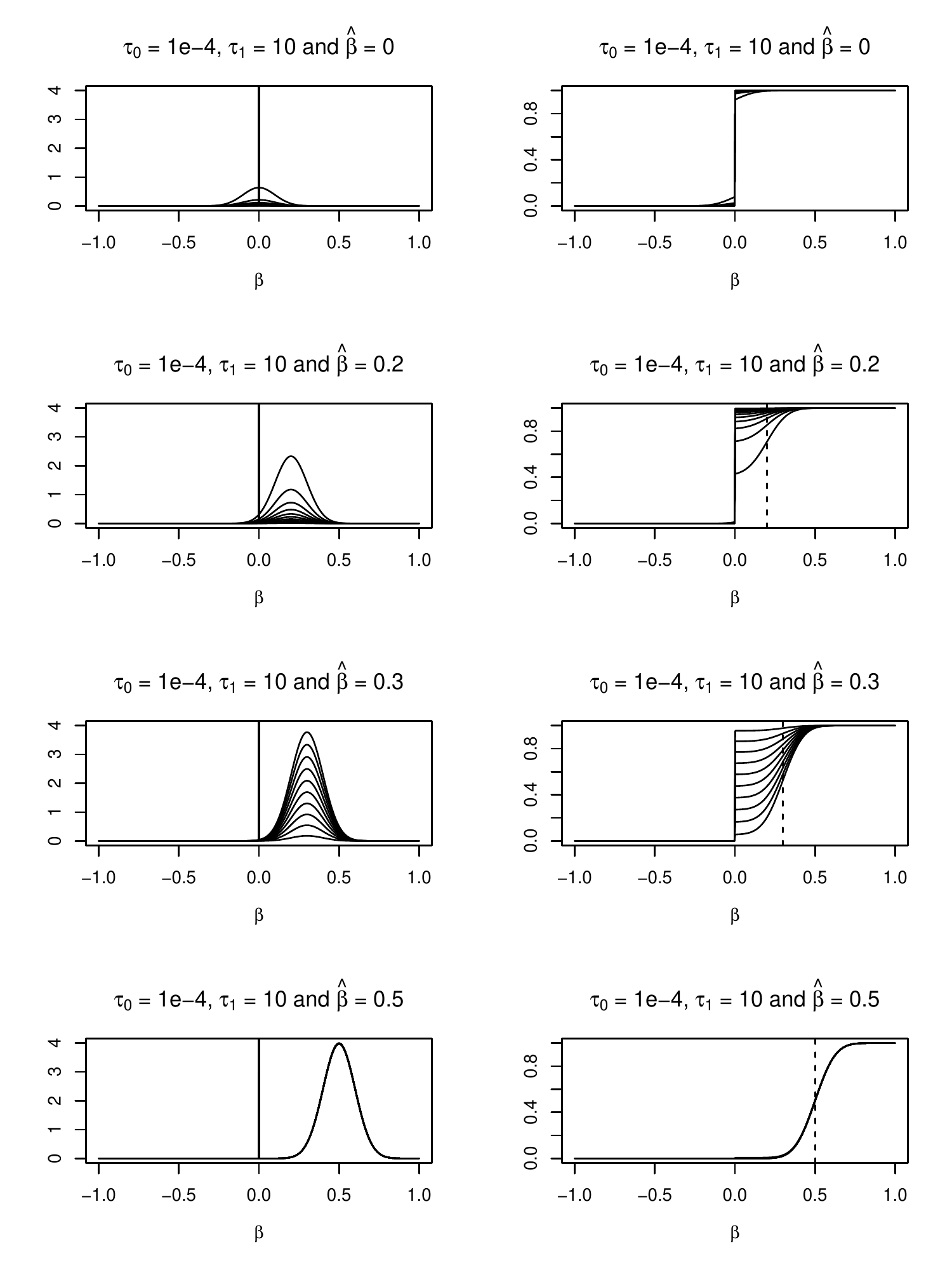}
    \caption{Posterior density function and corresponding cumulative distribution
    function of $\beta_j$ for different values of $\hat{\beta_j}$ over a set of $\alpha_j$ such that
    $\alpha_j\in [0.05, 0.95]$.}
    \label{fig:post:beta}
\end{figure}

\cref{eq:post:beta:mixture} shows that the posteriors of the regression coefficients 
are mixtures of two normal distributions. Clearly, the posteriors are bimodal when $\hat{\beta}_j\not=0$.
We illustrate the posteriors in \cref{fig:post:beta} for fixed $\sigma^2 = 1$,
$n=100$, $\tau_0=10^{-4}$ and $\tau_1$=10. In \cref{fig:post:beta}, the left column
shows the density functions and the right column shows the posterior cumulative
distribution functions (CDF). We show these posteriors
for four different values of $\hat{\beta}_j$ ($\hat{\beta}$ in the figure) over equispaced
grids of $\alpha_j$ so that $\alpha_j\in [0.05, 0.95]$. 
We observe that the posterior densities are bimodal except for the top row and
each of the posteriors has a spike component at zero. 
We also notice that for smaller values of 
$\hat{\beta}_j$, the posterior CDFs are more concentrated at zero. However, as we 
increase the value of $\hat{\beta}_j$, the posterior CDFs shift towards $\hat{\beta}_j$.
For a sufficiently large value of $\hat{\beta}_j$, the posterior CDFs  are concentrated at 
$\hat{\beta}_j$. 

\subsubsection{Properties of the posterior}
The posteriors of the regression coefficients enjoy several nice properties.
The posterior expectation of $\beta_j$
is given by
\begin{equation}
    E(\beta_j\mid y) = \frac{\alpha_jw_{1,j}}{W_j}\hat{\beta}_{1,j}
    +\frac{(1-\alpha_j)w_{0,j}}{W_j}\hat{\beta}_{0,j}.
\end{equation}
We observe that the posterior mean is monotonically increasing with respect to 
$\alpha_j$ for $\hat{\beta}_j > 0$ and monotonically decreasing with respect to
$\alpha_j$ for $\hat{\beta}_j < 0$ (see \cref{ap:lem:3} and the discussion following it). 

We show this in \cref{fig:post:exp:beta}. We
fix $n=100, \tau_0 = 10^{-4}, \tau_1 = 10$ and $\sigma^2 = 1$. We check posterior means
for six different possible values of $\hat{\beta}_j$. In the top row we show our results
for $\hat{\beta}_j>0$. We see that in the first two cases, the posterior means are 
monotonically increasing and in the third case it is close to constant. Similarly
in the bottom row, we show our result for $\hat{\beta}_j<0$. We see similarly that
the posterior means are decreasing in the first two cases, and remains close to
constant in the third case.

\begin{figure}[ht!]
    \centering
    \includegraphics[width = 0.95\linewidth]{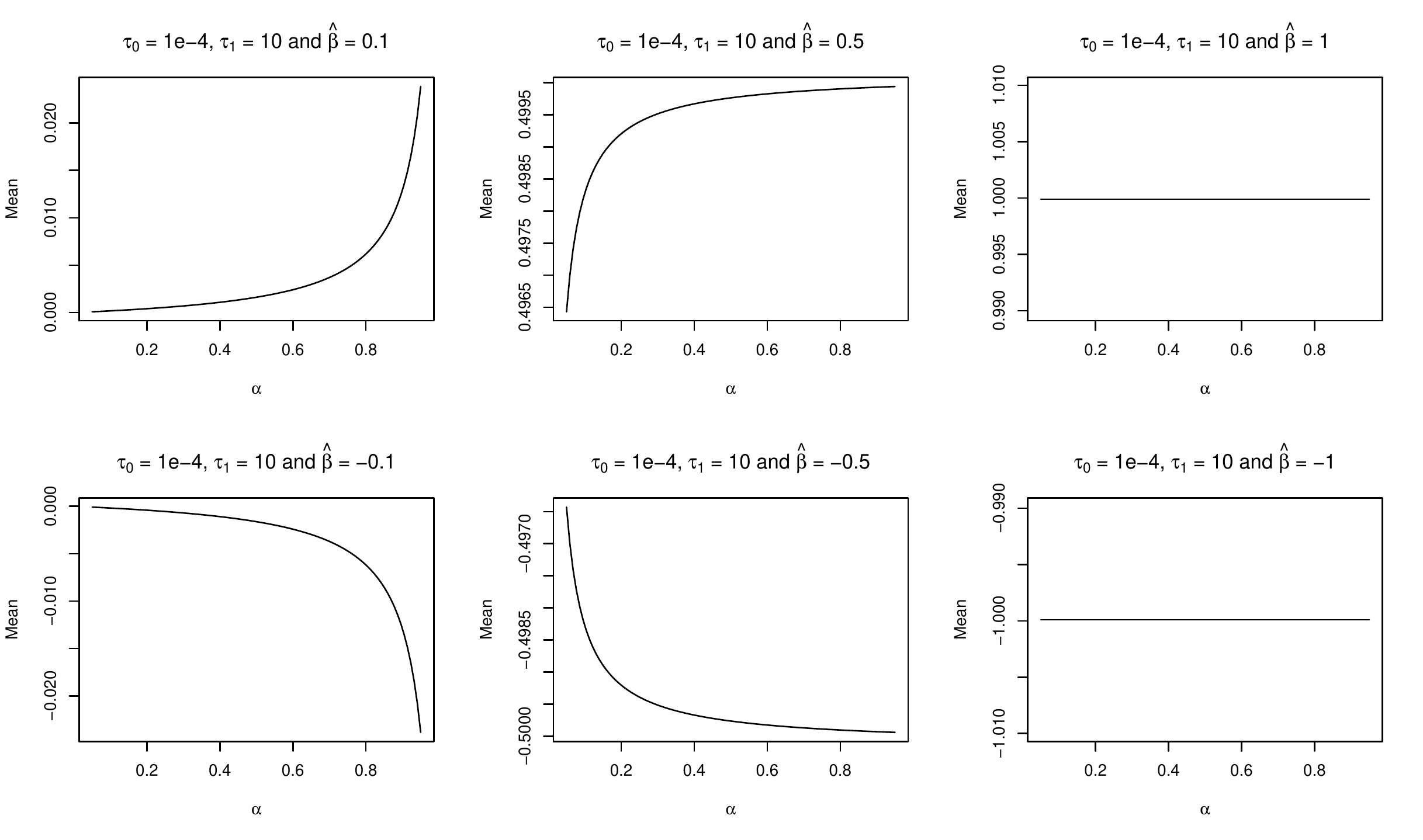}
    \caption{Relation between posterior expectation of $\beta$ and prior 
    selection probability $\alpha$ for different values of $\hat{\beta}$.}
    \label{fig:post:exp:beta}
\end{figure}

We also get a closed form expression for the posterior variance of $\beta_j$
(see \cref{ap:lem:4}), which is given by:
\begin{equation}
    \text{Var}(\beta_j\mid y) 
    = \frac{\alpha_jw_{1,j}\sigma^2_1 + (1-\alpha_j)w_{0,j}\sigma^2_0}{W_j}
    + \frac{\alpha(1-\alpha)w_{1,j}w_{0,j}(\hat{\beta}_{1,j} - \hat{\beta}_{0,j})^2}
    {W_j^2}.
\end{equation}
Therefore, we get a set of posterior variances $\mathcal{S}_j$ such that:
\begin{equation}
    \mathcal{S}_j\coloneqq \left\{\frac{\alpha_jw_{1,j}\sigma^2_1 + 
    (1-\alpha_j)w_{0,j}\sigma^2_0}{W_j}
    + \frac{\alpha(1-\alpha)w_{1,j}w_{0,j}(\hat{\beta}_{1,j} - \hat{\beta}_{0,j})^2}
    {W_j^2}
    :\alpha_j\in(0,1)\right\}
\end{equation}
where, $w_{k, j}$ and $\sigma_k$ are as defined before. The posterior variance of
$\beta_j$ does not have a monotonicity property like the posterior mean. 
In \cref{fig:beta:post:variance}, we show the effect of
$\alpha_j$ on the posterior variance for different values $\hat{\beta}$. We notice
that for extreme values of $\hat{\beta}$, the posterior variance is close to 
constant similar to our experience for posterior mean. 
\begin{figure}[ht!]
    \centering
    \includegraphics[width = 0.95\linewidth]{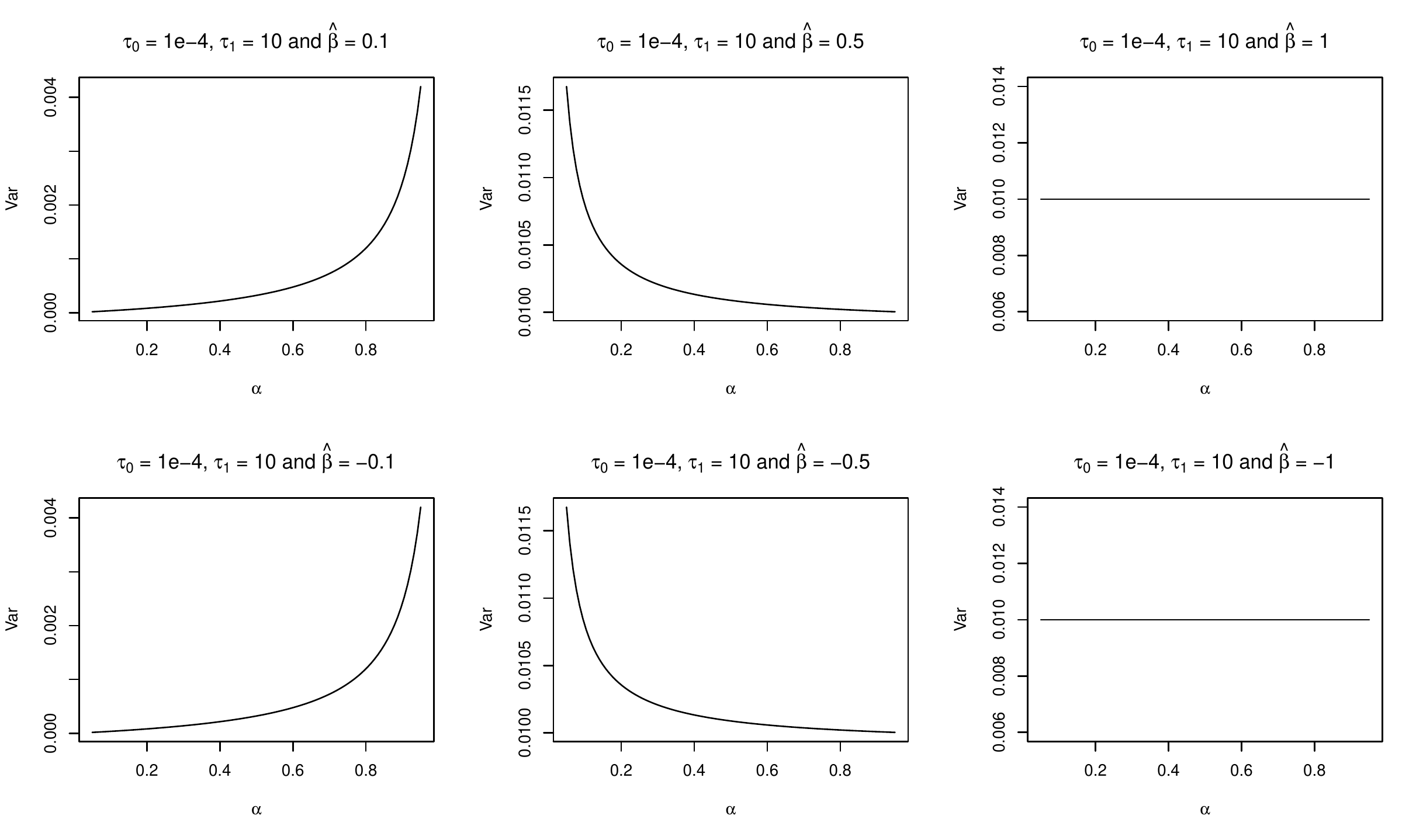}
    \caption{Relation between posterior variance of $\beta$ and prior 
    selection probability $\alpha$ for different values of $\hat{\beta}$.}
    \label{fig:beta:post:variance}
\end{figure}

\paragraph{Role in co-variate selection}
To show the role in co-variate selection, we first consider the ratios of the weights
in \cref{eq:post:beta:mixture}. For $1\le j \le p$, these ratios are given by:
\begin{equation}
    \frac{\alpha_j w_{1,j}}{(1-\alpha_j)w_{0,j}}.
\end{equation}
These ratios correspond to posterior selection probabilities of the selection indicators.
Therefore, for an active co-variate this ratio becomes greater than 1 for all 
$\alpha_j\in[\epsilon_1, 1-\epsilon_2]$ and $\mathcal{N}\left(\hat{\beta}_{1,j}, \sigma_1^2\right)$
dominates the posterior. Similarly, for a inactive co-variate this ratio becomes less 
than 1 for all values of $\alpha_j$ and
$\mathcal{N}\left(\hat{\beta}_{0,j}, \sigma_0^2\right)$ dominates the posterior. 
This allows us to propose the following lemma.

\begin{lemma}\label{lem:2}
$\mathcal{N}\left(\hat{\beta}_{1,j}, \sigma_1^2\right)$ dominates the posterior if
\begin{align}
    \hat{\beta_j^2} & >
    \frac{\sigma^2}{n}\frac{(n\tau_1^2+1)(n\tau_0^2+1)}{n\tau_1^2-n\tau_0^2}
    \left[
    2\ln\left(\frac{1-\epsilon_1}{\epsilon_1}\right)
    +\ln\left(\frac{n\tau_1^2+1}{n\tau_0^2+1}\right)\right].
\end{align}
and $\mathcal{N}\left(\hat{\beta}_{0,j}, \sigma_0^2\right)$ 
dominates the posterior if, 
\begin{align}
    \hat{\beta_j^2} & <
    \frac{\sigma^2}{n}\frac{(n\tau_1^2+1)(n\tau_0^2+1)}{n\tau_1^2-n\tau_0^2}
    \left[
    2\ln\left(\frac{\epsilon_2}{1-\epsilon_2}\right)
    +\ln\left(\frac{n\tau_1^2+1}{n\tau_0^2+1}\right)\right].
\end{align}
\end{lemma}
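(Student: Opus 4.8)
The plan is to translate the two domination criteria, which are phrased in the paragraph preceding the lemma as conditions on the mixture weight ratio $\frac{\alpha_j w_{1,j}}{(1-\alpha_j)w_{0,j}}$ from \cref{eq:post:beta:mixture}, directly into thresholds on $\hat{\beta}_j^2$ by substituting the explicit form of $w_{k,j}$ supplied by \cref{lem:1}. The slab component $\mathcal{N}\left(\hat{\beta}_{1,j}, \sigma_1^2\right)$ dominates precisely when this ratio exceeds $1$ for every $\alpha_j\in[\epsilon_1, 1-\epsilon_2]$; by the monotonicity of the ratio in $\alpha_j$ established just before \cref{eq:active:strong}, the infimum is attained at $\alpha_j=\epsilon_1$, so this is equivalent to the single endpoint inequality $\frac{w_{1,j}}{w_{0,j}} > \frac{1-\epsilon_1}{\epsilon_1}$. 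Symmetrically, the spike component dominates when the supremum over the interval, attained at $\alpha_j=1-\epsilon_2$, satisfies $\frac{w_{1,j}}{w_{0,j}} < \frac{\epsilon_2}{1-\epsilon_2}$.

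First I would compute the log-ratio of the weights. From \cref{lem:1} we have $w_{k,j} = (n\tau_k^2+1)^{-1/2}\exp\left(-\frac{n\hat{\beta}_j^2}{2\sigma^2(n\tau_k^2+1)}\right)$, so that
\[
\ln\frac{w_{1,j}}{w_{0,j}} = \frac{1}{2}\ln\frac{n\tau_0^2+1}{n\tau_1^2+1} + \frac{n\hat{\beta}_j^2}{2\sigma^2}\left(\frac{1}{n\tau_0^2+1} - \frac{1}{n\tau_1^2+1}\right).
\]
The crucial observation is that the bracketed difference equals $\frac{n(\tau_1^2-\tau_0^2)}{(n\tau_0^2+1)(n\tau_1^2+1)}$, which is strictly positive because $\tau_1>\tau_0$; hence $\ln\frac{w_{1,j}}{w_{0,j}}$ is strictly increasing in $\hat{\beta}_j^2$, consistent with the intuition that a larger least-squares estimate pushes the posterior mass toward the slab.

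Then I would take logarithms in the two endpoint inequalities and isolate $\hat{\beta}_j^2$. For the slab, $\ln\frac{w_{1,j}}{w_{0,j}} > \ln\frac{1-\epsilon_1}{\epsilon_1}$ rearranges to
\[
\frac{n\hat{\beta}_j^2}{2\sigma^2}\cdot\frac{n(\tau_1^2-\tau_0^2)}{(n\tau_0^2+1)(n\tau_1^2+1)} > \ln\frac{1-\epsilon_1}{\epsilon_1} + \frac{1}{2}\ln\frac{n\tau_1^2+1}{n\tau_0^2+1};
\]
since the coefficient of $\hat{\beta}_j^2$ is positive, dividing through preserves the inequality, and absorbing the factor $2$ into the bracket gives exactly the first displayed bound. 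The spike case is identical with $\epsilon_1$ replaced by $1-\epsilon_2$ and the inequality reversed, yielding the second bound. The positivity of $\tau_1^2-\tau_0^2$ is what guarantees that the division step flips neither inequality, so this sign check is the only point that requires any care; the remainder is routine algebraic rearrangement.
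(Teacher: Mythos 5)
Your proposal is correct and follows essentially the same route as the paper's proof in \cref{ap:lem:2}: both reduce the domination criteria via monotonicity in $\alpha_j$ to the endpoint inequalities $\frac{\epsilon_1}{1-\epsilon_1}\cdot\frac{w_{1,j}}{w_{0,j}}>1$ and $\frac{1-\epsilon_2}{\epsilon_2}\cdot\frac{w_{1,j}}{w_{0,j}}<1$, substitute the explicit $w_{k,j}$ from \cref{lem:1}, take logarithms, and rearrange using $\frac{1}{n\tau_0^2+1}-\frac{1}{n\tau_1^2+1}=\frac{n\tau_1^2-n\tau_0^2}{(n\tau_0^2+1)(n\tau_1^2+1)}>0$. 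The only cosmetic difference is that you isolate the log weight ratio first, whereas the paper manipulates the exponential inequality directly; the algebra and sign check are identical.
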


\begin{proof}
The proof of this lemma is provided in \cref{ap:lem:2}.
\end{proof}
\begin{figure}[ht]
    \centering
    \includegraphics[width = 0.95\linewidth]{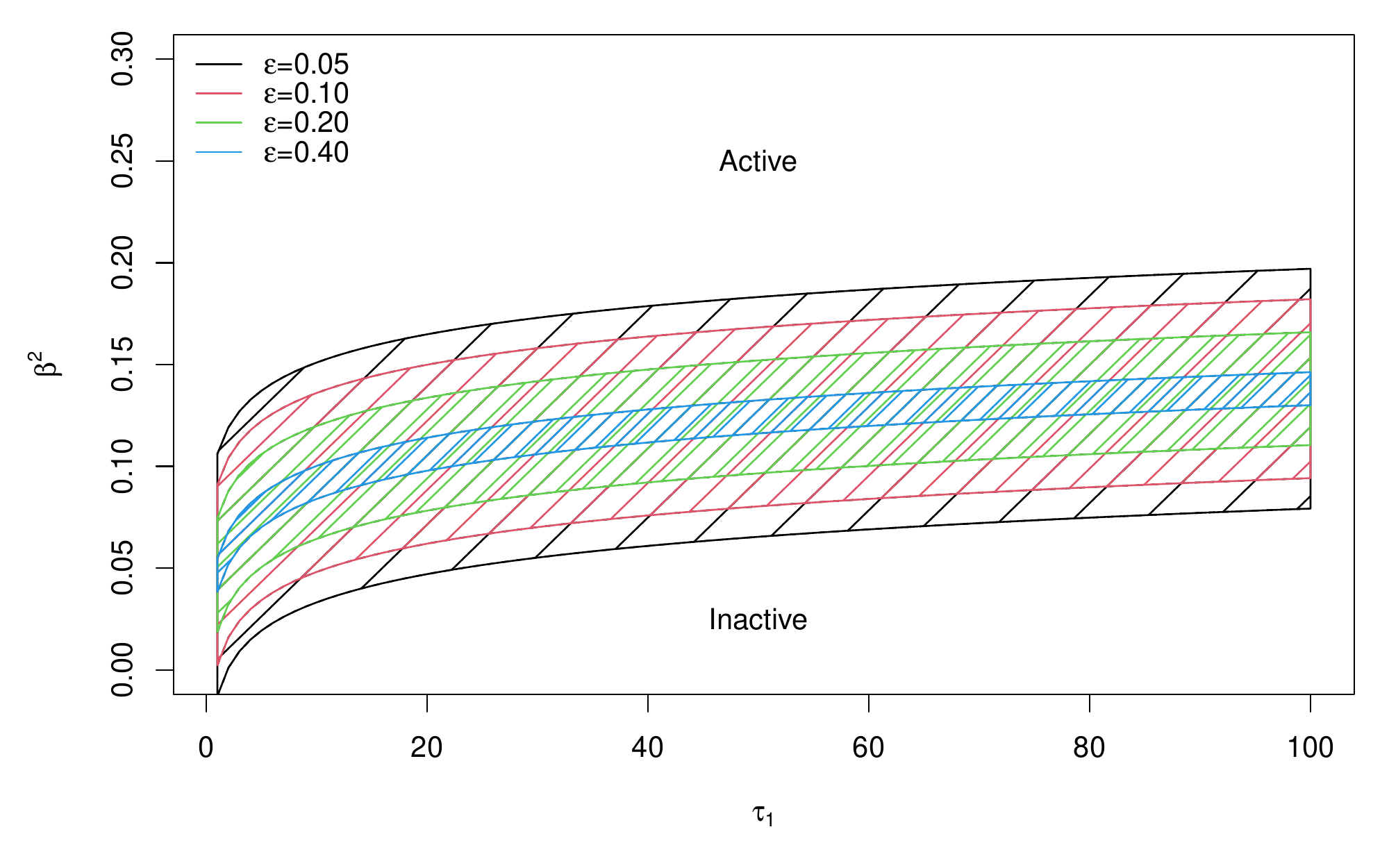}
    \caption{Effect of $\tau_1$ in specifying the region of indeterminacy
    for different values of $\epsilon$.}
    \label{fig:tau1:indet}
\end{figure}
We can further simplify this result if $\epsilon_1=\epsilon_2=\epsilon$, that is 
when $\alpha_j\in [\epsilon, 1 - \epsilon]$ and $\tau_0 \ll 1/n$, then 
$\mathcal{N}\left(\hat{\beta}_{1,j}, \sigma_1^2\right)$ dominates the posterior
if, 
\begin{equation}
    \hat{\beta}_j^2 > \frac{\sigma^2}{n}
    \frac{(n\tau_1^2+1)}{n\tau_1^2}\left[
    \ln\left(n\tau_1^2+1\right)
    +2\ln\left(\frac{1-\epsilon}{\epsilon}\right)
    \right],\label{post:condition:beta:post:low}
\end{equation}
and similarly, $\mathcal{N}\left(\hat{\beta}_{0,j}, \sigma_0^2\right)$ dominates the 
posterior if,
\begin{equation}
    \hat{\beta}_j^2 < \frac{\sigma^2}{n}
    \frac{(n\tau_1^2+1)}{n\tau_1^2}\left[
    \ln\left(n\tau_1^2+1\right)
    -2\ln\left(\frac{1-\epsilon}{\epsilon}\right)
    \right].\label{post:condition:beta}
\end{equation}
Clearly, for $\epsilon=0.5$, the right hand sides of \cref{post:condition:beta:post:low}
and \cref{post:condition:beta} are equal.

We can compute a region of indeterminacy using \cref{post:condition:beta:post:low}
and \cref{post:condition:beta}. If the value of $\hat{\beta}_j^2$ lies in between these
bounds then we consider the $j$-th   co-variate as indeterminate. We
illustrate this in \cref{fig:tau1:indet}. The shaded area shows the region
of indeterminacy for different values of $\alpha_j\in[\epsilon, 1-\epsilon]$.
Clearly, the region of indeterminacy depends on the values of $\epsilon$ and higher
values of $\epsilon$ shrink the region of indeterminacy. We also notice
that extreme values of $\tau_1$ may lead to poor results in variable selection. A very
small value of $\tau_1$ will force some inactive co-variates to be indeterminate
whereas a very high value of $\tau_1$ will force some non-zero small effects
to be inactive.

\section{Posterior Computation for General Case}\label{sec:post:general}

\addition{
The orthogonal case allows us to decompose the joint density function in a 
convenient way for known variance $\sigma^2$. However, in reality, 
the variance is unknown. Moreover, variable selection
is generally applied for correlated datasets
which cannot be transformed to an orthogonal design. 
As a consequence, in most practical cases, $\gamma_j$ and $\beta_j$ are 
no longer a posteriori independent. 
Instead, we have closed form expressions for the joint posteriors of $\gamma$
and $\beta$ which can be point of interest for theoretical properties.

\subsection{Selection indicators}

The joint posterior of the selection indicators is given by:
\begin{align}
    P(\gamma\mid y)
    &\overset{\gamma}{\propto}\int\int\int P(\beta, \gamma, \sigma^2, q \mid y)d\beta dq d\sigma^2\\
    &\overset{\gamma}{\propto} \int\left[\int P(y\mid \beta, \sigma^2)P(\beta \mid \gamma, \sigma^2)d\beta\right]
    \left[\int P(\gamma\mid q) P(q) 
    dq\right]P(\sigma^2) d \sigma^2.
\end{align}
To simplify the expression, we first prove the following lemma:
\begin{lemma}\label{lem:int:beta}
Let $D_{\gamma}\coloneqq \text{diag}(\tau_{1}^{-2\gamma_j}\tau_0^{2(1-\gamma_j)})$,
$L_{\gamma}\coloneqq (\mathbf{x}^T\mathbf{x} + D_{\gamma})^{-1}$ and $\mu_{\gamma}\coloneqq L_{\gamma}\mathbf{x}^Ty$. Then,
\begin{equation}
    \int P(y\mid \beta,\sigma^2)P(\beta\mid \gamma,\sigma^2)d\beta = 
    \frac{1}{\sqrt{(2\pi\sigma^2)^{n}}}
    \left(\frac{\sqrt{|L_{\gamma}|}}{\tau_1^{\sum\gamma_j} \tau_0^{(p-\sum\gamma_j)}}\right)
    \exp\left(-\frac{y^Ty-{\mu_{\gamma}}^T L_{\gamma}^{-1}\mu_{\gamma}}{2\sigma^2}\right)
\end{equation}
\end{lemma}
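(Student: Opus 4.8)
The plan is to treat this as a standard Gaussian marginalisation: write both factors of the integrand explicitly, merge the two quadratic forms in $\beta$ appearing in the exponent into a single quadratic, complete the square using $L_\gamma$ and $\mu_\gamma$, and then recognise the leftover $\beta$-integral as the normalising constant of a multivariate normal density.

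First I would use the likelihood from \cref{eq:likelihood} together with the conditional prior $P(\beta\mid\gamma,\sigma^2)=\prod_j f_{\gamma_j}(\beta_j)$, which in matrix form reads $\frac{1}{\sqrt{(2\pi\sigma^2)^p}\,\tau_1^{\sum\gamma_j}\tau_0^{p-\sum\gamma_j}}\exp\!\left(-\frac{1}{2\sigma^2}\beta^TD_\gamma\beta\right)$, where $D_\gamma$ collects the inverse prior scales on its diagonal. Multiplying the two densities, the constant prefactor $\frac{1}{\sqrt{(2\pi\sigma^2)^{n+p}}\,\tau_1^{\sum\gamma_j}\tau_0^{p-\sum\gamma_j}}$ can be pulled out of the integral, leaving the exponent $-\frac{1}{2\sigma^2}\bigl(\|y-\mathbf{x}\beta\|_2^2+\beta^TD_\gamma\beta\bigr)$.

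The central algebraic step is to expand $\|y-\mathbf{x}\beta\|_2^2+\beta^TD_\gamma\beta = y^Ty - 2\beta^T\mathbf{x}^Ty + \beta^T(\mathbf{x}^T\mathbf{x}+D_\gamma)\beta = y^Ty - 2\beta^T\mathbf{x}^Ty + \beta^TL_\gamma^{-1}\beta$ and then complete the square. Since $L_\gamma^{-1}\mu_\gamma=\mathbf{x}^Ty$ by the definition $\mu_\gamma=L_\gamma\mathbf{x}^Ty$, this rewrites as $y^Ty - \mu_\gamma^TL_\gamma^{-1}\mu_\gamma + (\beta-\mu_\gamma)^TL_\gamma^{-1}(\beta-\mu_\gamma)$; the first two terms are constant in $\beta$ and reproduce exactly the exponential factor appearing in the claim.

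Finally I would integrate the remaining Gaussian kernel, using $\int \exp\!\left(-\frac{1}{2\sigma^2}(\beta-\mu_\gamma)^TL_\gamma^{-1}(\beta-\mu_\gamma)\right)d\beta = \sqrt{(2\pi\sigma^2)^p|L_\gamma|}$, which is the normalising constant of a $\mathcal{N}(\mu_\gamma,\sigma^2L_\gamma)$ density. Collecting this factor with the prefactor cancels the $(2\pi\sigma^2)^p$ contributions and yields the stated formula. There is no genuine obstacle, as the computation is elementary; the only care required is in the bookkeeping of the determinant factor $\sqrt{|L_\gamma|}$ and in verifying that $L_\gamma^{-1}\mu_\gamma=\mathbf{x}^Ty$ so that the cross term of the completed square matches $-2\beta^T\mathbf{x}^Ty$ precisely.
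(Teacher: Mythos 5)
Your proof is correct and follows essentially the same route as the paper's (the published version omits the proof as "straightforward," but the computation the authors have in mind is exactly this: merge the Gaussian likelihood and conditional prior, complete the square via $L_\gamma^{-1}=\mathbf{x}^T\mathbf{x}+D_\gamma$ and $\mu_\gamma=L_\gamma\mathbf{x}^Ty$, and integrate out the $\mathcal{N}(\mu_\gamma,\sigma^2 L_\gamma)$ kernel to pick up the $\sqrt{(2\pi\sigma^2)^p|L_\gamma|}$ factor). Your reading of $D_\gamma$ as the diagonal matrix of inverse prior scales, i.e.\ with entries $\tau_1^{-2\gamma_j}\tau_0^{-2(1-\gamma_j)}$, is the correct one needed for the bookkeeping to close, despite the sign slip in the exponent of $\tau_0$ in the lemma's displayed definition.
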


\begin{proof} The lemma can be proven using the arithmetic properties of multivariate normal
distributions and therefore we omit the proof.
\end{proof}

\iffalse
\begin{proof}
**TB: I have written it down here, perhaps we remove it?**
\begin{align}
    &\int P(y\mid \beta,\sigma^2)P(\beta\mid \gamma)d\beta \nonumber\\
    & =\int\frac{1}{\sqrt{(2\pi\sigma^2)^n}}\exp\left(-\frac{1}{2\sigma^2}\|y-\mathbf{x}\beta\|^2_2\right)
    \left(\prod_{j} \frac{1}{\sqrt{2\pi\sigma^2\tau_1^{2\gamma_j} 
    \tau_0^{2(1-\gamma_j)}}}\right)\exp\left(-\frac{1}{2\sigma^2}\beta^T D_{\gamma}\beta\right)d\beta\\
    &=\frac{1}{\sqrt{(2\pi\sigma^2)^n}}\left(\prod_{j} \frac{1}{\sqrt{2\pi\sigma^2\tau_1^{2\gamma_j} \tau_0^{2(1-\gamma_j)}}}\right)\int \exp\left(-\frac{y^Ty - 2 \beta^T\mathbf{x}^Ty + \beta^T(\mathbf{x}^T\mathbf{x}+D_{\gamma})\beta }{2\sigma^2}\right)d\beta\\
    &= \frac{1}{\sqrt{(2\pi\sigma^2)^{n+p}}}\frac{1}{\tau_1^{\sum\gamma_j} 
    \tau_0^{p-\sum\gamma_j}}\exp\left(-\frac{y^Ty-{\mu_{\gamma}}^T L_{\gamma}^{-1}\mu_{\gamma}}{2\sigma^2}\right)
    \int \exp\left(-\frac{(\beta - \mu_{\gamma})^TL_{\gamma}^{-1}(\beta - \mu_{\gamma})}{2\sigma^2 }\right)d\beta\\
    &= \frac{\sqrt{(2\pi\sigma^2)^p | L_{\gamma}|}}
    {\sqrt{(2\pi\sigma^2)^{n+p}}}
    \left(\frac{1}{\tau_1^{\sum\gamma_j} \tau_0^{(p-\sum\gamma_j)}}\right)
    \exp\left(-\frac{y^Ty-{\mu_{\gamma}}^T L_{\gamma}^{-1}\mu_{\gamma}}{2\sigma^2}\right)\\
    &=\frac{1}{\sqrt{(2\pi\sigma^2)^{n}}}
    \left(\frac{\sqrt{|L_{\gamma}|}}{\tau_1^{\sum\gamma_j} \tau_0^{(p-\sum\gamma_j)}}\right)
    \exp\left(-\frac{y^Ty-{\mu_{\gamma}}^T L_{\gamma}^{-1}\mu_{\gamma}}{2\sigma^2}\right)
\end{align}

\end{proof}

\fi

Now, by using \cref{lem:int:beta} and standard results of the beta-Bernoulli model, we get
\begin{align}
    &P(\gamma\mid y)\nonumber\\
    &\overset{\gamma}{\propto}\int \tfrac{1}{\sqrt{(2\pi\sigma^2)^{n}}}
    \left(\tfrac{\sqrt{| L_{\gamma}|}}{\tau_1^{\sum\gamma_j} \tau_0^{(p-\sum\gamma_j)}}\right)
    \exp\left(-\tfrac{y^Ty-{\mu_{\gamma}}^T L_{\gamma}^{-1}\mu_{\gamma}}{2\sigma^2}\right)
    \left(\prod_{j} \alpha_j^{\gamma_j}
    (1-\alpha_j)^{1-\gamma_j}\right) P(\sigma^2) d\sigma^2\\
    &\overset{\gamma}{\propto} \left(\prod_{j} \alpha_j^{\gamma_j}
    (1-\alpha_j)^{1-\gamma_j}\right)
    \left(\tfrac{\sqrt{| L_{\gamma}|}}{\tau_1^{\sum\gamma_j} \tau_0^{(p-\sum\gamma_j)}}\right) 
    \int \tfrac{1}{\sqrt{(2\pi\sigma^2)^{n}}}
    \exp\left(-\tfrac{y^Ty-{\mu_{\gamma}}^T L_{\gamma}^{-1}\mu_{\gamma}}{2\sigma^2}\right) P(\sigma^2) d\sigma^2\\
    &\overset{\gamma}{\propto} \left(\prod_{j} \alpha_j^{\gamma_j}
    (1-\alpha_j)^{1-\gamma_j}\right)
    \left(\tfrac{\sqrt{| L_{\gamma}|}}{\tau_1^{\sum\gamma_j} \tau_0^{(p-\sum\gamma_j)}}\right)
    \\*\nonumber&\qquad\qquad\int \tfrac{1}{\sigma^{n}}
    \exp\left(-\tfrac{y^Ty-{\mu_{\gamma}}^T L_{\gamma}^{-1}\mu_{\gamma}}{2\sigma^2}\right) 
    \tfrac{1}{\sigma^{2(a+1)}}
    \exp\left(-\tfrac{b}{\sigma^2}\right) d\sigma^2\\
    &\overset{\gamma}{\propto} \left(\prod_{j} \alpha_j^{\gamma_j}
    (1-\alpha_j)^{1-\gamma_j}\right)
    \left(\tfrac{\sqrt{| L_{\gamma}|}}{\tau_1^{\sum\gamma_j} \tau_0^{(p-\sum\gamma_j)}}\right)
    \\*\nonumber&\qquad\qquad\int \tfrac{1}{\sigma^{2(n/2 +a+1)}}
    \exp\left(-\tfrac{1}{\sigma^2}
    \left(b + \tfrac{y^Ty - {\mu_{\gamma}}^T L_{\gamma}^{-1} \mu_{\gamma}}{2}\right)\right) d\sigma^2\\
    &\overset{\gamma}{\propto} \left(\prod_{j} \alpha_j^{\gamma_j}
    (1-\alpha_j)^{1-\gamma_j}\right)
    \left(\tfrac{\sqrt{| L_{\gamma}|}}{\tau_1^{\sum\gamma_j} \tau_0^{(p-\sum\gamma_j)}}\right) 
    \tfrac{\Gamma(n/2+a)}{\left(b + \tfrac{y^Ty - {\mu_{\gamma}}^T L_{\gamma}^{-1} \mu_{\gamma}}{2}\right)^{n/2+a}}.
\end{align}
Therefore,
\begin{equation}
    P(\gamma\mid y) = \frac{\left(\prod_{j} \alpha_j^{\gamma_j}
    (1-\alpha_j)^{1-\gamma_j}\right)\left(\frac{\sqrt{| L_{\gamma}|}}
    {\tau_1^{\sum\gamma_j} \tau_0^{(p-\sum\gamma_j)}}\right) 
    \frac{1}{\left(b + \frac{y^Ty - {\mu_{\gamma}}^T L_{\gamma}^{-1} \mu_{\gamma}}{2}\right)^{n/2+a}}}
    {\sum_{\gamma}\left(\prod_{j} \alpha_j^{\gamma_j}
    (1-\alpha_j)^{1-\gamma_j}\right)\left(\frac{\sqrt{| L_{\gamma}|}}
    {\tau_1^{\sum\gamma_j} \tau_0^{(p-\sum\gamma_j)}}\right) 
    \frac{1}{\left(b + \frac{y^Ty - {\mu_{\gamma}}^T L_{\gamma}^{-1} \mu_{\gamma}}{2}\right)^{n/2+a}}}.\label{eq:joint:post:gamma}
\end{equation}

Clearly, we get the most probable model when $P(\gamma\mid y)$ is maximal. However, for that we need to 
search a space of size $2^p$ and therefore it is not feasible to do that for very large 
values of $p$, which is often the case for high dimensional problems.

\Cref{eq:joint:post:gamma} also allow us to obtain the posterior odds between
two different models. Let $\gamma^{'}$ and $\gamma^{''}$ denote two
different models. Then, we have
\begin{align}
    \frac{P(\gamma = \gamma^{'}\mid y)}{P(\gamma= \gamma^{''}\mid y)}
    & = \frac{\left(\prod_{j} \alpha_j^{\gamma^{'}_j}
    (1-\alpha_j)^{1-\gamma^{'}_j}\right)\left(\frac{\sqrt{| L_{\gamma^{'}}|}}
    {\tau_1^{\sum\gamma^{'}_j} \tau_0^{(p-\sum\gamma^{'}_j)}}\right) 
    \frac{1}{\left(b + \frac{y^Ty - {\mu_{\gamma^{'}}}^T L_{\gamma^{'}}^{-1} \mu_{\gamma^{'}}}{2}\right)^{n/2+a}}}
    {\left(\prod_{j} \alpha_j^{\gamma^{''}_j}
    (1-\alpha_j)^{1-\gamma^{''}_j}\right)\left(\frac{\sqrt{| L_{\gamma^{''}}|}}
    {\tau_1^{\sum\gamma^{''}_j} \tau_0^{(p-\sum\gamma^{''}_j)}}\right) 
    \frac{1}{\left(b + \frac{y^Ty - {\mu_{\gamma^{''}}}^T L_{\gamma^{''}}^{-1} \mu_{\gamma^{''}}}{2}\right)^{n/2+a}}}\\
    & = \frac{\left(\prod_{j} 
    \left(\frac{1-\alpha_j}{\alpha_j}\right)^{1-\gamma^{'}_j}\right)
    \left(\frac{\tau_0}{\tau_1}\right)^{\sum\gamma^{'}_j}
    \sqrt{| L_{\gamma^{'}}|} 
    \frac{1}{\left(b + \frac{y^Ty - {\mu_{\gamma^{'}}}^T L_{\gamma^{'}}^{-1} \mu_{\gamma^{'}}}{2}\right)^{n/2+a}}}
    {\left(\prod_{j} \left(\frac{1-\alpha_j}{\alpha_j}\right)^{1-\gamma^{''}_j}\right)
    \left(\frac{\tau_0}{\tau_1}\right)^{\sum\gamma^{''}_j}
    \sqrt{| L_{\gamma^{''}}|} 
    \frac{1}{\left(b + \frac{y^Ty - {\mu_{\gamma^{''}}}^T L_{\gamma^{''}}^{-1} \mu_{\gamma^{''}}}{2}\right)^{n/2+a}}}\\
    & = \left(\prod_{j} 
    \left(\frac{1-\alpha_j}{\alpha_j}\right)^{\gamma^{''}_j-\gamma^{'}_j}\right)
    \left(\frac{\tau_0}{\tau_1}\right)^{\sum(\gamma^{'}_j - \gamma^{''}_j)}
    \frac{\sqrt{| L_{\gamma^{'}}|}}{\sqrt{| L_{\gamma^{''}}|}}
    \left(\frac{b + \frac{y^Ty - {\mu_{\gamma^{''}}}^T L_{\gamma^{''}}^{-1} \mu_{\gamma^{''}}}{2}}
    {b + \frac{y^Ty - {\mu_{\gamma^{'}}}^T L_{\gamma^{'}}^{-1} \mu_{\gamma^{'}}}{2}}\right)^{n/2+a}.\label{eq:joint:gamma:post:odds}
\end{align}
We can see that the posterior odds between two models are monotone with
respect to the prior expectations of the selection probabilities.  This 
suggests that incorrect values of $\alpha$ will affect the 
posterior model probabilities and we may not get the most probable model. 
Therefore, it is beneficial to have a cautious 
approach when we are unsure of the true values of $\alpha$
The monotonicity also suggests that we need to compute the bounds of the posterior odds on the
extreme values of $\alpha$ similar to our result for orthogonal design
case. However, as mentioned earlier, this will
require $2\cdot2^p$ number of computations to compute all such posterior
odds, which is not very practical. Instead,
we will be employing a Gibbs sampling scheme to compute the marginals of $\gamma$
to evaluate the posterior odds of $\gamma_j$.

}

\addition{
\subsection{Regression coefficients}
Similar to the joint posterior of the selection indicators, we can also compute the joint posterior
of the regression coefficients.

Let $r_{\gamma} = \frac{y^Ty - y^T\mathbf{x}L_{\gamma}\mathbf{x}^Ty}{2} +b, \text{ and }
    \Sigma^{-1}_{\gamma} = \frac{n+2a}{2r_{\gamma}}L_{\gamma}^{-1}$.
Then the joint posterior of $\beta$ is given
by:
\begin{align}
    P(\beta \mid y)
    &=\frac{\sum_{\gamma}\left(
    \left(\prod_{j} \alpha_j^{\gamma_j}
    (1-\alpha_j)^{1-\gamma_j}\right)\left(\frac{\sqrt{| L_{\gamma}|}}
    {\tau_1^{\sum\gamma_j} \tau_0^{(p-\sum\gamma_j)}}\right) 
    \frac{1}{\left(b + \frac{y^Ty - {\mu_{\gamma}}^T L_{\gamma}^{-1} \mu_{\gamma}}{2}\right)^{n/2+a}}
    \mathcal{T}_{n+2a}(\mu_{\gamma}, \Sigma_{\gamma})
    \right)}
    {\sum_{\gamma}\left(
    \left(\prod_{j} \alpha_j^{\gamma_j}
    (1-\alpha_j)^{1-\gamma_j}\right)\left(\frac{\sqrt{| L_{\gamma}|}}
    {\tau_1^{\sum\gamma_j} \tau_0^{(p-\sum\gamma_j)}}\right) 
    \frac{1}{\left(b + \frac{y^Ty - {\mu_{\gamma}}^T L_{\gamma}^{-1} \mu_{\gamma}}{2}\right)^{n/2+a}}\right)},
\end{align}
where $\mathcal{T}_{n+2a}(\mu, \Sigma)$ denotes a multivariate t-distribution with
mean $\mu$, scale-matrix $\Sigma$ and degrees of freedom $n+2a$. 
The calculation of the joint posterior requires some non-trivial arithmetic manipulations,
which is too lengthy for the main text. Please check appendix (\cref{app:sec:rc})
for more details.

Now, as we see from the above expression, the joint posterior
of $\beta$ can be represented as a weighted mixture of multivariate t-distributions,
where the weight of each component corresponds to the posterior
model selection probability. This is very similar to our result for the orthogonal design case, where instead of joint posterior of $\beta$, we had similar expression
for $\beta_j$ for each $j$. 

The closed form expression of the posterior of $\beta$ allows us to
obtain the posterior expectation of $\beta$, which is given by:

\begin{equation}
    E(\beta \mid y) = 
    \frac{\sum_{\gamma}\left(
    \left(\prod_{j} \alpha_j^{\gamma_j}
    (1-\alpha_j)^{1-\gamma_j}\right)\left(\frac{\sqrt{| L_{\gamma}|}}
    {\tau_1^{\sum\gamma_j} \tau_0^{(p-\sum\gamma_j)}}\right) 
    \frac{1}{\left(b + \frac{y^Ty - {\mu_{\gamma}}^T L_{\gamma}^{-1} \mu_{\gamma}}{2}\right)^{n/2+a}}
    \mu_{\gamma}
    \right)}
    {\sum_{\gamma}\left(
    \left(\prod_{j} \alpha_j^{\gamma_j}
    (1-\alpha_j)^{1-\gamma_j}\right)\left(\frac{\sqrt{| L_{\gamma}|}}
    {\tau_1^{\sum\gamma_j} \tau_0^{(p-\sum\gamma_j)}}\right) 
    \frac{1}{\left(b + \frac{y^Ty - {\mu_{\gamma}}^T L_{\gamma}^{-1} \mu_{\gamma}}{2}\right)^{n/2+a}}\right)}.
\end{equation}

Unlike before, we do not have an expression to show the role of the posterior in variable selection
as we do not have least square estimates to start with. However, one may try to obtain
such expression using Ridge estimates. This may lead to a contour where we can define
a region of indeterminacy similar to what we showed in orthogonal cases.

}
\subsection{Gibbs sampling algorithm}
We provided the joint posteriors of the selection indicators and the
regression coefficients. These are not particularly useful for parameter
estimation.
Therefore, we need a suitable computation scheme for general cases. 
Interestingly, our choice of priors allows us to obtain closed expressions for the
full conditional distributions of the modelling parameters. From these expressions, we can easily implement a Gibbs sampling routine \cite{gelfand_gibbs}
to compute posterior distributions and hence perform variable selection. In the rest of this section, we derive these expressions.

Recall the joint posterior in \cref{eq:joint:post}. Then 
the joint conditional distribution of the regression coefficients is given by:
\begin{align}
    P(\beta\mid \gamma, \sigma^2, q, y)
    &\overset{\beta}{\propto} P(\beta, \gamma, \sigma^2, q \mid y)\\
    &\overset{\beta}{\propto} P(y\mid \beta, \sigma^2)P(\beta \mid \gamma, \sigma^2)\\
    &\overset{\beta}{\propto} \exp\left(-\frac{1}{2\sigma^2}\|y-\mathbf{x}\beta\|^2_2\right)
    \prod_{j=1}^{p} f_{\gamma_j}(\beta_j)\\
    &\overset{\beta}{\propto} \exp\left(-\frac{\beta^T\mathbf{x}^T\mathbf{x}\beta - 2\beta^T\mathbf{x}^Ty}
    {2\sigma^2}\right)\prod_{j=1}^{p} f_{\gamma_j}(\beta_j)\\
    &\overset{\beta}{\propto} \exp\left(-\frac{\beta^T\mathbf{x}^T\mathbf{x}\beta - 2\beta^T\mathbf{x}^Ty}
    {2\sigma^2}\right)\prod_{j=1}^{p}
    \exp\left(-\frac{\beta_j^2}{2\sigma^2\tau_{\gamma_j}^2}\right)\label{eq:full:cond:beta:1}.
\end{align}
Let $D_{\gamma}\coloneqq \text{diag}(\tau_{1}^{-2\gamma_j}\tau_0^{2(1-\gamma_j)})$, then we can rewrite
\cref{eq:full:cond:beta:1} as
\begin{align}
    P(\beta\mid \gamma, \sigma^2, q, y)
    &\overset{\beta}{\propto} \exp\left(-\frac{\beta^T\mathbf{x}^T\mathbf{x}\beta - 2\beta^T\mathbf{x}^Ty}
    {2\sigma^2}\right)\exp\left(-\frac{\beta^T D_{\gamma}\beta}{2\sigma^2}\right)\\
    &\overset{\beta}{\propto}\exp\left(-\frac{\beta^T\mathbf{x}^T\mathbf{x}\beta - 2\beta^T\mathbf{x}^Ty
    +\beta^T D_{\gamma}\beta}{2\sigma^2}\right)\\
    &\overset{\beta}{\propto} \exp\left(-
    \frac{(\beta-\mu_{\gamma})^TL_{\gamma}^{-1}(\beta-\mu_{\gamma})}
    {2\sigma^2}\right)
\end{align}
where, \addition{$L_{\gamma}\coloneqq (\mathbf{x}^T\mathbf{x} + D_{\gamma})^{-1}$ 
and $\mu_{\gamma}\coloneqq L_{\gamma}\mathbf{x}^Ty$}. Therefore
the full conditional of $\beta$ follows a normal distribution such that:
\begin{equation}
    \beta\mid \gamma, \sigma^2, q, y \sim \mathcal{N}(\mu_{\gamma}, \sigma^2L_{\gamma}).
\end{equation}
%\paragraph{Selection indicators}

For the selection indicators , we only need to compute the probability of $\gamma_j$
conditional on $\beta$, $\sigma^2$ and $q_j$. Therefore, we can compute these posteriors
component wise:
\begin{align}
    P(\gamma_j \mid \beta_j, \sigma^2, q_j)
    &\overset{\gamma_j}{\propto} P(\beta_j\mid \gamma_j, \sigma^2) P(\gamma_j\mid q_j)\\
    &\overset{\gamma_j}{\propto} q_j^{\gamma_j}(1-q_j)^{1-\gamma_j} f_{\gamma_j}(\beta_j)\\
    &\overset{\gamma_j}{\propto} [q_jf_{\gamma_j}(\beta_j)]^{\gamma_j}
    [(1-q_j)f_{\gamma_j}(\beta_j)]^{1-\gamma_j}.
\end{align}
Therefore, $\gamma_j \mid \beta_j, \sigma^2$ follows a Bernoulli distribution with
\begin{equation}
    P(\gamma_j = 1 \mid \beta_j, \sigma^2) = 
    \frac{q_jf_{1}(\beta_j)}{q_jf_{1}(\beta_j)+(1-q_j)f_{0}(\beta_j)}.
\end{equation}

Unlike orthogonal design case, the choice of concentration parameter plays an important
role on the conditional distributions of $q_j$'s for the Gibbs sampling algorithm. The conditional
distribution of $q_j$'s follows a beta distribution
\begin{equation}
    q_j\mid \gamma_j \sim \text{Beta} (s\alpha_j + \gamma_j, s(1 - \alpha_j)+ 1 - \gamma_j),
\end{equation}
where $\alpha_j\in\mathcal{P}$. 

The conditional distribution of $\sigma^2$ is given by:
\begin{align}
    &P(\sigma^2\mid \beta, \gamma, y)\nonumber\\
    &\overset{\sigma^2}{\propto} P(y\mid \beta, \sigma^2)P(\beta\mid \gamma, \sigma^2)P(\sigma^2)\\
    &\overset{\sigma^2}{\propto}\frac{1}{\sigma^n}\exp\left(-\frac{\|y-\mathbf{x}\beta\|_2^2}
    {2\sigma^2}\right)\frac{1}{\sigma^p}\exp\left(-\frac{\beta^T D_{\gamma}\beta}{2\sigma^2}\right)\frac{1}{\sigma^{2(a+1)}}
    \exp\left(-\frac{b}{\sigma^2}\right)\\
    &\overset{\sigma^2}{\propto}\frac{1}{\sigma^{2(p/2+ n/2 + a+1)}}
    \exp\left\{-\frac{1}{\sigma^2}\left(
    \frac{\|y-\mathbf{x}\beta\|_2^2}{2} + \frac{\beta^T D_{\gamma}\beta}{2} +b\right)\right\}
\end{align}
Therefore, 
\begin{equation}
    \sigma^2\mid \beta, \gamma, y \sim 
    \text{IG}\left(a + \frac{p}{2} + \frac{n}{2}, b + 
    \frac{\|y-\mathbf{x}\beta\|_2^2}{2} + \frac{\beta^T D_{\gamma}\beta}{2}\right).
\end{equation}

\section{Illustration}\label{sec:illustrate}

In this section,  
\addition{we demonstrate our sensitivity analysis based 
variable selection approach for both synthetic and real datasets.
For that, we will define different accuracy measures,
which will be used for model fitting.

\subsection{Accuracy Measures}
So far, we discussed how we wish to perform a sensitivity analysis
over a set of priors to obtain a cautious variable selection scheme.
In this section, we discuss different aspects of model fitting 
in our sensitivity analysis based paradigm. As mentioned earlier, our method provides a 
set of posteriors instead of a single posterior. Therefore, during model
fitting, we will also have multiple models instead of a single model.
Therefore, to discuss model fitting, we need 
different measures. To define those measures, we first consider the following
set
\begin{equation}
    \mathcal{A}(\alpha)\coloneqq \left\{j:\left\{\frac{P\left(\gamma_j = 1\mid y\right)}
    {P\left(\gamma_j = 0\mid y\right)}\right\} > 1\right\}.
\end{equation}
Therefore, $\mathcal{A}(\alpha)$ denotes the set of active 
variables for each value of $\alpha$. 

Now, we define
minimum squared error so that
\begin{equation}
    \text{Minimum Squared Error} = 
    \min_{\alpha\in \mathcal{P}}\left\|Y-
    \mathbf{X}_{\mathcal{A}(\alpha)}\hat{\beta}^{\text{post}}_{\mathcal{A}(\alpha)}
    \right\|_2^2
\end{equation}
and maximum squared error so that
\begin{equation}
    \text{Maximum Squared Error} = 
    \max_{\alpha\in \mathcal{P}}\left\|Y-
    \mathbf{X}_{\mathcal{A}(\alpha)}\hat{\beta}^{\text{post}}_{\mathcal{A}(\alpha)}
    \right\|_2^2
\end{equation}
where 
$\hat{\beta}^{\text{post}}_{\mathcal{A}(\alpha)}\coloneqq E(\beta_{\mathcal{A}(\alpha)}\mid Y)$
is the posterior mean of $\beta$ for the selected subset 
$\mathcal{A}(\alpha)$. This way, we can get an optimistic fit and a pessimistic
%change to fit
fit by using the minimum squared error and maximum squared error respectively. 
Clearly, for other methods we do not have a dependence on $\alpha$. So,
to compare with other methods, we consider the conventional squared error
given by:
\begin{equation}
    \text{Squared Error} = 
    \left\|Y-\mathbf{X}_{\mathcal{A}}\hat{\beta}^{\text{post}}_{\mathcal{A}}
    \right\|_2^2,
\end{equation}
where, $\mathcal{A}$ is the set of selected variables and 
$\hat{\beta}^{\text{post}}_{\mathcal{A}}$ is the posterior estimates
of the regression coefficients of the selected variables.

Having multiple models also creates an indeterminacy in prediction. 
Therefore, we can also capture the model indeterminacy
using the following measure:
\begin{equation}
    \text{Model Indeterminacy} = \frac{\text{Maximum Squared Error} - \text{Minimum Squared Error}}
    {\text{Maximum Squared Error}}.
\end{equation}
Clearly, model indeterminacy can be non-zero, even when there are zero indeterminate variables present in the model.
}

\subsection{Synthetic Datasets}
In this section we will show the accuracy of our method in terms of variable 
selection \addition{as well as model fitting}. We construct four 
different synthetic datasets to investigate different aspects
of variable selection problems. We generate the 
design matrix ($\mathbf{x}$) with 100 predictors ($p$) and 50 observations ($n$) so that
$\mathbf{x}_i\sim \mathcal{N}(0, \Sigma)$ for $1\le i \le 50$, \addition{
where $[\Sigma]_{ij} = 0.2 ^{|i-j|}$.}
We consider four different sets of regression coefficients to generate the response under the
model defined in \cref{eq:lm}.
\begin{itemize}
    \item Dataset 1: 10 active predictors (sparse model)
    \item Dataset 2: 20 active predictors (fairly sparse model)
    \item Dataset 3: 50 active predictors (fairly dense model)
    \item Dataset 4: 60 active predictors (dense model)
\end{itemize}
In all of the above cases, we generate the true regression coefficients $\beta^*$
from $U([-4,-1]\cup [1,4])$ and assign a random noise \addition{$\epsilon_i\sim\mathcal{N}(0,4)$}.

\addition{To evaluate the accuracy in} model fitting, \addition{we simply
use the minimum and maximum squared
errors. But, synthetic datasets also allow us to check the accuracy of 
parameter estimation, and in order to evaluate that, we need} to compare the posterior estimates
of the regression coefficients with \addition{their} true values. 
\addition{Let $\hat{\beta}^{\text{post}}\coloneqq E(\beta\mid Y)$.
That is, $\hat{\beta}^{\text{post}}$ denotes the posterior
expectation of $\beta$. Then we define
the following:
\begin{equation}
    \Delta(\beta)\coloneqq 
    \sum_{j=1}^p (\hat{\beta}^{\text{post}}_{j}\cdot 
    \mathbb{I}_{j\in \mathcal{A}} -\beta^*_j)^2,
\end{equation}
where $\mathbb{I}_{j\in \mathcal{A}}$ is the indicator that $j$-th
variable is in $\mathcal{A}$. For our sensitivity analysis based
approach, this $\Delta(\beta)$ can be seen as a function of $\alpha$.}

\paragraph{Results}
\addition{For illustration}, we consider two different
sets for $\alpha_j$ so that one set represents a near vacuous case and the other set 
represents elicited prior information. To specify the near vacuous case, we consider 
$\alpha_j\in[0.05,0.95]$ for the $j$-th co-variate. The choice of the
elicitation based sets is dependent on the examples. \addition{We fit a ridge regression
model on the data and then calculate the $p$-values of the regression estimates
to understand the total number of active co-variates in the model.}
For instance, for the first synthetic dataset, \addition{we see that we can have 5 to 12 different
active co-variates based on different tolerances on the $p$-value. Therefore,} we set 
\addition{$\alpha_j\in [0.05, 0.12]$}. Similarly, we consider 
\addition{$\alpha_j\in[0.08,0.22]$} for the second, 
\addition{$\alpha_j\in[0.10,0.33]$} for the third and 
\addition{$\alpha_j\in[0.16,0.34]$} for the fourth synthetic dataset. 
We fix $\tau_1 = 5$ for the near vacuous case  and $\tau_1=1$ for the elicitation based case 
and for all the cases, we fix $\tau_0 = 10^{-6}$
to ensure a spike at 0.

\addition{
\begin{table}[ht]
    \centering
    \tiny
    \renewcommand{\arraystretch}{.95}
    \begin{tabular}{|l|ccccrc|}
  \hline
 Methods & Act & FA & Inact & FI &
 Sq.~E 
 &  $\Delta(\beta)$\\ 
      \hline
      \multicolumn{7}{|c|}{Synthetic dataset 1, active 10 and inactive 90}\\
      \hline
CBVS-NV (optimistic) & 8-1 & 0-0 & 0-91 & 0-1 & 61.04 & 2.39 \\ 
CBVS-NV (pessimistic) & 8-92 & 0-90 & 0-0 & 0-0 & 470.81 & 36.63 \\
  CBVS-E (optimistic) & 9-0 & 0-0 & 90-1 & 0-1 & 64.54 & 1.86 \\ 
  CBVS-E (pessimistic) & 9-0 & 0-0 & 90-1 & 0-1 & 69.81 & \textbf{1.82} \\ 
  Horseshoe & 8 & 0 & 92 & 2 & 77.86 & 5.71 \\ 
  Spike and Slab & 8 & 0 & 92 & 2 & 212.29 & 16.92 \\ 
  Bayesian LASSO & 13 & 3 & 87 & 0 & 69.06 & 2.23 \\ 
  MCP & 13 & 3 & 87 & 0 & \textbf{56.77} & 2.61 \\  
  SCAD & 15 & 5 & 85 & 0 & 56.94 & 2.60 \\ 
  LASSO & 22 & 12 & 78 & 0 & 62.53 & 2.01 \\
  \hline
  \multicolumn{7}{|c|}{Synthetic dataset 2, active 20 and inactive 80}\\
  \hline
  CBVS-NV (optimistic) & 2-28 & 0-16 & 0-70 & 0-6 & 457.13 & 78.63 \\ 
  CBVS-NV (pessimistic) & 2-31 & 0-24 & 0-67 & 0-11 & 2962.91 & 238.98 \\
  CBVS-E (optimistic) & 8-8 & 1-0 & 79-5 & 4-1 & \textbf{188.33} & \textbf{23.45} \\ 
  CBVS-E (pessimistic) & 8-5 & 1-4 & 79-8 & 4-8 & 1533.69 & 94.93 \\
  Horseshoe & 8 & 1 & 92 & 13 & 808.71 & 73.31 \\ 
  Spike and Slab & 5 & 0 & 95 & 15 & 1521.08 & 98.29 \\ 
  Bayesian LASSO & 26 & 10 & 74 & 4 & 676.22 & 48.77 \\ 
  MCP & 20 & 12 & 80 & 12 & 1615.63 & 105.10 \\ 
  SCAD & 27 & 17 & 73 & 10 & 1650.48 & 102.96 \\ 
  LASSO & 42 & 25 & 58 & 3 & 602.24 & 43.13 \\
  \hline
      \multicolumn{7}{|c|}{Synthetic dataset 3, active 50 and inactive 50}\\
      \hline
  CBVS-NV (optimistic) & 2-44 & 1-18 & 0-54 & 0-23 & \textbf{1181.73} & 376.48 \\ 
  CBVS-NV (pessimistic) & 2-2 & 1-0 & 0-96 & 0-47 & 4940.32 & 476.50 \\ 
  CBVS-E (optimistic) & 8-9 & 4-0 & 77-6 & 34-3 & 1700.08 & 337.18 \\ 
  CBVS-E (pessimistic) & 8-1 & 4-1 & 77-14 & 34-12 & 4028.74 & 433.16 \\
  Horseshoe & 6 & 2 & 94 & 46 & 2130.42 & 392.36 \\
  Spike and Slab & 9 & 2 & 91 & 43 & 4143.12 & 375.22 \\ 
  Bayesian LASSO & 22 & 7 & 78 & 35 & 2095.79 & \textbf{310.55} \\ 
  MCP & 12 & 4 & 88 & 42 & 5206.86 & 477.08 \\ 
  SCAD & 14 & 5 & 86 & 41 & 4861.73 & 474.14 \\
  LASSO & 47 & 18 & 53 & 21 & 1826.17 & 337.08 \\
  \hline
      \multicolumn{7}{|c|}{Synthetic dataset 4, active 60 and inactive 40}\\
      \hline
  CBVS-NV (optimistic) & 1-28 & 0-5 & 0-71 & 0-36 & \textbf{1356.60} & 548.39 \\
  CBVS-NV (pessimistic) & 1-32 & 0-11 & 0-67 & 0-38 & 10552.30 & 837.40 \\ 
  CBVS-E (optimistic) & 10-16 & 0-4 & 70-4 & 36-2 & 2151.60 & 294.74 \\ 
  CBVS-E (pessimistic) & 10-3 & 0-2 & 70-17 & 36-13 & 4172.37 & 420.02 \\ 
  Horseshoe & 2 & 0 & 98 & 58 & 3702.57 & 387.81 \\
  Spike and Slab & 11 & 1 & 89 & 50 & 3768.20 & 383.40 \\ 
  Bayesian LASSO & 23 & 4 & 77 & 41 & 3078.20 & 300.71 \\
  MCP & 6 & 0 & 94 & 54 & 4417.04 & 394.47 \\ 
  SCAD & 16 & 1 & 84 & 45 & 3316.51 & 353.27 \\  
  LASSO & 40 & 10 & 60 & 30 & 2464.52 & \textbf{276.88} \\ 
  \hline
    \end{tabular}
\caption{Summary of variable selection and model fitting for 4 different synthetic datasets. 
The first four rows in each section show the results of our robust
Bayesian analysis and therefore the numbers are separated by hyphens as some 
of the variables are indeterminate in our analyses but are active (inactive)
in the concerning model.}
\label{tab:summary:vs}
\end{table}
}

We provide the result of our analyses in \cref{tab:summary:vs}. \addition{In the table,
we denote our cautious method by CBVS, followed by two different abbreviations;
NV for near vacuous case and E for the elicitation based case. We also} 
use six other methods for comparison, which are horseshoe (half-Cauchy), spike and slab prior, Bayesian LASSO,
MCP (minimax concave penalty \cite{zhang2010}), SCAD and LASSO. These methods are performed using the R packages 
\texttt{horseshoe}, \texttt{spikeslab}, \texttt{monomvn} (for Bayesian LASSO), \texttt{ncvreg} (for MCP and SCAD) and
\texttt{glmnet}. 

In the first column of \addition{\cref{tab:summary:vs}}, we show the number of active variables
present in the model.
\addition{For our cautious variable selection method, two numbers are provided separated by a hyphen.
The left side of the hyphen shows the number of active variables,
identified by the
method and the right side shows the number of indeterminate variables (obtained from 
our robust Bayesian analysis), which are considered to be  active
in the model. For instance, for our near vacuous
analysis with the first dataset, 
we get 1 indeterminate variable to be active in the optimistic fit
but 92 indeterminate variables appears to be active in the pessimistic
model}. In the next column, we present
the \addition{number of falsely active variables. Similar to the first column, we also
split this number in two for our robust Bayesian analyses}. The third 
\addition{and the fourth} columns show \addition{the results for inactive 
variables. Clearly, as other methods do not produce any indeterminate
variable, there are only single numbers for these methods.}
In the next two columns, we \addition{provide the squared errors and the}  
deviation of the posterior estimates of the regression coefficients with 
respect to their true value.

For the first \addition{dataset, we notice that our method gives us 8 active variables
and 92 indeterminate variables for the near vacuous case. This gives us
decent result for the optimistic fit which is not the case for
the pessimistic fit. We see that all of the indeterminate
variables are considered as active and as a result we get really high values
for the squared error and $\Delta(\beta)$. This may seem obvious and 
suggests us to ignore the pessimistic fits in general but that might not
be the right thing to do. We can see this from the elicitation 
based analysis. In this case, we see that the optimistic fit and pessimistic
model have very similar outcomes in terms of variable selection and
model fitting. However, in this particular case the pessimistic fit
performs better than the optimistic fit in terms
of parameter estimation and gives us lower values of $\Delta(\beta)$.
We also notice that the frequentist methods tend to do better for this
dataset, however these methods also tend to select more variables in the
model.

For the second dataset, we notice that our elicitation based case performs
really well. The optimistic fit of CBVS-E outperforms every other
model in terms of model fitting as well as parameter estimation. It
is also fairly accurate in terms of variable selection. The elicitation
based case identifies 8 active variables and 13 indeterminate variables.
In the optimistic fit, it considers 8 of these 13 indeterminate
variables, which leads to a better performance in variable selection.
Some other methods such as Bayesian LASSO and LASSO includes more
number of active variables in the model. However, both of these
methods also tend to include too many false active variables in the
model which increases the value of squared error and $\Delta(\beta)$.

The results for third and fourth datasets show the importance of performing
analysis with a near vacuous set of prior distribution. Both of these datasets are dense, 
that is the number of true active variables is very high. In general, it  
is very difficult to elicit information from high dimensional dense models.
We can notice that from the initial
analysis with ridge estimates. In both cases, ridge estimates suggest
us fewer numbers of active variables in the datasets than there should
be. We do not face this problem with a
near vacuous set of priors and we get better results from the optimistic fits
of the near vacuous analyses. This happens as the near vacuous 
case allows us to include more variables in the model, which is important
for dense datasets. However, the near vacuous case may give us very  poor
results for the pessimistic fit, which is evident from our analysis
using the fourth dataset. This leads to a very high indeterminacy
of $0.87$ as well, which is not desirable and suggests that we must gather
more information. We also notice that
the elicitation based analysis performs much better in parameter estimation.
This happens as the total number of misspecified 
variables is lower for the elicitation based case. This is also the
case for Bayesian LASSO, which tends to overshrink the regression
coefficients to attain sparsity. This results to fewer 
false positives and keeps the value of
$\Delta(\beta)$ in control.

}

\subsection{Real Data Analysis}
We use three real datasets to inspect different aspects of high dimensional problems.
For each dataset, two different sets of $\alpha$ are considered to represent both
near vacuous case and elicitation based case.

\paragraph{Diabetes dataset}
The Diabetes dataset \cite{efron2004} features 
10 predictors which are age, sex, body mass index, average blood pressure and six blood serum
measurements. The response denotes the disease progression in one year

Preliminary analysis such as ordinary least squares suggests that this dataset 
contains 2 to 5 active variables depending upon our choice. 
Based on this preliminary analysis,
we consider two different sets to specify our prior expectation of the selection 
probabilities denoted by $\alpha\coloneqq(\alpha_1$, \dots, $\alpha_p)$. We first 
specify a near vacuous set so that, $\alpha_j\in [0.1, 0.9]$ and we choose the other
set so that $\alpha_j\in[0.2, 0.5]$. Therefore, our second choice of $\alpha_j$'s is
a direct representation of our prior information on the selection probability of 
variables.

\paragraph{Gaia dataset}
The Gaia dataset was
used for computer experiments \cite{jones2010,Einbeck2008} prior to the launch of European
Space Agency's Gaia mission. %\cite{Gaia}. 
The data contains 8286 
observations on the spectral information of 16 $(p)$
wavelength bands, and four different stellar parameters. In this example, we take 
stellar-temperature (in Kelvin scale) as the response variable. 

The variables in the Gaia dataset are highly correlated and 
previous work by \citet{Einbeck2008} suggests that there are only 1-3 main 
contributory variables. Based on this information, we take two sets for $\alpha_j$
similar to our choice of $\alpha_j$ for Diabetes dataset.
We specify our first set to specify a near vacuous set and choose $\alpha_j\in[0.1,0.9]$.
The second set is based on our prior information on the contributory variables and we set
$\alpha_j\in [0.0625, 0.1875]$. 

\paragraph{Lymphoma dataset}
We investigate the Lymphoma dataset \cite{Alizadeh2000} to illustrate our result for 
a high dimensional problem. In this dataset, there are 7399 genes related to B-cell
Lymphoma along with the response which denotes censored survival times. There are
only 240 observations in this dataset which makes the problem ultra high dimensional
that is $p\gg n$. Performing Bayesian analysis in this type of dataset is 
extremely difficult and we use a variable screening method to identify 200 
important co-variates. We use the package \texttt{VariableScreening} 
to obtain the first 200 co-variates based on the correlation distance. 

The choice of selection probability for this dataset is difficult and we choose $\alpha_j$ based
on the selected co-variates after the variable screening. We fit a ridge regression
model to examine the $p$-values. This preliminary analysis suggests that we may
consider 20 to 30 variables based on our tolerance. Therefore, we specify our
elicitation based set as $\alpha_j\in[0.1,0.15]$. For the near vacuous case, we stick to
our previous examples and choose $\alpha_j\in[0.1,0.9]$.

\paragraph{Results} Similar to our analyses for synthetic datasets, we use six other methods
to compare with our robust Bayesian analysis. We show the results in \cref{tab:real:sum}.
The first \addition{two} columns (from left) in the table show the number 
of active and inactive variables. \addition{For our sensitivity based
approach (CBVS), we represent these variables so that the first part shows the
number of active (inactive) variables and the second part shows the
number of indeterminate variables found in our robust Bayesian analysis, which appear to be 
active (inactive) in the model.} \addition{These are} followed by \addition{the} squared error.

For the Diabetes dataset, \addition{we see that our optimistic fit for the elicitation
based analysis performs really well in terms of the squared error and only
Bayesian LASSO outperforms our method.
In this particular case, we see that  2 of the 
3} indeterminate variables \addition{remain active in the model, whereas
for the pessimistic fit, all 3 of them are active. For
the near vacuous based case, we have a total of 5 indeterminate variables,
out of which 3 are active for the optimistic case and all of them are active
for the pessimistic case.} We also notice that for both these cases, 
our method selects 3rd and 9th covariates to be active, similar 
to our analysis with horseshoe prior. \addition{We can see that
overall indeterminacy for both CBVS-NV and CBVS-E is very low, which
suggest that the data used here is enough to perform a standard Bayesian 
analysis. } 

\addition{The analyses with the Gaia dataset is very interesting. As mentioned 
earlier, this dataset is highly correlated and we can also notice the
effect of correlation through these analyses.} 
Our robust Bayesian analyses consider the sixth covariate to be active.
\addition{This is also the case for our analysis with the  horseshoe prior}. 
However, unlike the previous case, our analysis with the near vacuous 
set does not identify any variable as inactive \addition{and in fact, in the
optimistic fit given by CBVS-NV, all the variables are considered
to be active. This is a problem of overfitting in regression 
models using correlated data and shows the importance of prior
elicitation}. This also results \addition{in} a higher model indeterminacy 
\addition{in the near vacuous analysis}, which is not the case for 
elicitation based \addition{analysis. In the elicitation based
case, we see that the optimistic fit gives us a total of 3 active variables
out of which 2 are indeterminate variables. The performance
of the optimistic fit in terms of model fitting is also good and is indeed
better than the other Bayesian methods.}

For the Lymphoma dataset, we notice that only 3 variables remain active
in the model after our robust Bayesian analyses. This is also the 
case for the Bayesian LASSO. However, for this particular dataset, 
our analysis with horseshoe prior produces the null model and
the output remains the same after several replications. 
We also notice that for this dataset, the likelihood based methods
such as LASSO, SCAD or MCP perform better than the Bayesian methods. 
These methods tend to include more covariates in the model unlike the 
Bayesian alternatives. But, \addition{that might not be the reason 
behind the enhanced performance as the pessimistic fit in CBVS-NV
includes a total of 39 variables (36 indeterminate) as active in the
model and yet the squared error remains very high. High squared
error of the pessimistic fit also leads to 
a very high indeterminacy ($0.72$) for the near vacuous case and shows that
some elicitation needs to be done for a more reliable answer. 
This can also be verified from the table, where we can see that the 
indeterminacy becomes less for the elicitation based case.}  

\begin{table}[ht]
    \centering
    \scriptsize
    \begin{tabular}{|l|ccc|}
  \hline
Method & Active & Inactive & Sq.~Err\\ 
  \hline
    \multicolumn{4}{|c|}{Diabetes Dataset (p = 10, n = 100) }\\
  \hline
CBVS-NV (optimistic) & 2-3 & 3-2 & {6.24e+04} \\ 
  CBVS-NV (pessimistic) & 2-5 & 3-0 & 6.61e+04 \\ 
  CBVS-E (optimistic) & 2-2 & 5-1 & 6.18e+04 \\ 
  CBVS-E (pessimistic) & 2-3 & 5-0 & 6.58e+04 \\ 
  Horseshoe & 2 & 8 & 6.39e+04 \\ 
  Spike and Slab & 9 & 1 & 6.59e+04  \\ 
  Bayesian LASSO & 4 & 6 & \textbf{6.10e+04} \\ 
  MCP & 7 & 3 & 7.05e+04 \\ 
  SCAD & 8 & 2 & 7.18e+04 \\ 
  LASSO & 8 & 2 & 6.52e+04 \\  
  \hline
    \multicolumn{4}{|c|}{Gaia Dataset (p = 16, n = 100) }\\
  \hline
CBVS-NV (optimistic) & 1-15 & 0-0 & \textbf{2.74e+08} \\ 
  CBVS-NV (pessimistic) & 1-0 & 0-15 & 3.64e+08 \\ 
  CBVS-E (optimistic) & 1-2 & 12-1 & 2.82e+08 \\ 
  CBVS-E (pessimistic) & 1-1 & 12-2 & 3.47e+08 \\ 
  Horseshoe & 1 & 15 & 2.95e+08 \\ 
  Spike and Slab & 6 & 10 & 2.90e+08  \\ 
  Bayesian LASSO & 2 & 14 & 2.90e+08  \\ 
  MCP & 4 & 12 & 2.86e+08 \\ 
  SCAD & 3 & 13 & 2.79e+08  \\ 
  LASSO & 6 & 10 & 2.77e+08  \\ 
  \hline
    \multicolumn{4}{|c|}{Lymphoma Dataset (p = 200, n = 100 }\\
  \hline
CBVS-NV (optimistic) & 3-0 & 0-197 & 2.54e+02\\ 
  CBVS-NV (pessimistic) & 3-36 & 0-161 & 8.98e+02 \\ 
  CBVS-E (optimistic) & 3-1 & 196-0 & 2.55e+02 \\ 
  CBVS-E (pessimistic) & 3-0 & 196-1 & 2.99e+02 \\ 
  Horseshoe & 0 & 200 &3.36e+02 \\ 
  Spike and Slab & 12 & 188 & 2.94e+02  \\ 
  Bayesian LASSO & 3 & 197 & 2.70e+02 \\ 
  MCP & 12 & 188 & \textbf{2.26e+02} \\ 
  SCAD & 21 & 179 & 2.45e+02 \\ 
  LASSO & 27 & 173 & {2.29e+02} \\ 
\hline
    \end{tabular}
    \caption{Summary of variable selection and model fitting for the real datasets.
    The first four rows in each section show the result of our robust
Bayesian analysis and therefore the numbers are separated by hyphens as some 
of the variables are indeterminate in our analyses but are active (inactive)
in the concerning model.}
\label{tab:real:sum}
\end{table}

\section{Conclusion}\label{sec:conc}
In this article, we propose a \addition{sensitivity analysis based cautious} 
variable selection \addition{scheme} for high dimensional problems 
\addition{using the} spike and slab framework.
Our framework is focused on the effect of prior elicitation in high dimensional
problems with limited information. We incorporate the prior information through
a set of priors and perform a robust Bayesian analysis \addition{by checking the sensitivity
of the variable selection over this set of priors.} The choice of conjugate
priors in our hierarchical model allows us to inspect several properties of the
posterior which are desirable in a robust Bayesian context. We provide the possibility
of controlling the prior on the regression coefficients through prior 
selection probabilities as well as the scale parameters. \addition{We discuss
the notion of a cautious variable selection rule based on this sensitivity analysis which
is robust to our choice of hyperparameter.} We also provide a suitable Gibbs sampling 
framework for the \addition{parameter estimation}. 

\addition{A major result which we obtain from our hierarchical model
is that the posterior odds of model selection are monotone with respect
to the prior expectations of the selection probabilities. This result shows
the importance of robust Bayesian analysis, especially for high dimensional 
problems with limited information, where extracting information on the
model size is extremely difficult. However, unlike the classical 
approach, robust Bayesian analysis will not give us the most probable 
model, instead it will return a set of most probable models.
We can further analyse this set based on the original sector of the 
problem. For instance, this is particularly useful 
in decision making problems, when there's a non-negative
gain for abstaining from making a decision \cite{du2005} and 
a wrongly selected model will lead to undesirable loss.
}

One of the important aspects of Bayesian variable selection methods is to 
investigate maximum a posteriori estimates which we deliberately ignore in this
work. We observe that, for obtaining sparse MAP estimates, we require suitable
regularity conditions on prior parameters to attain sparsity in an asymptotic
sense. These regularity conditions on the prior parameter go against our approach 
where we are more interested in prior elicitation to tackle the severe uncertainty
around the problem rather than using  automatic parameter tuning.

We analysed both synthetic datasets and real datasets to illustrate our
method. We considered different aspects of variable selection problems and 
constructed our synthetic datasets accordingly. We observe that our method
provides us more reliable results for dense models which is often not the case
for other methods used for comparison. We also realised the need of a more
convincing utility based loss function for model comparison. Currently, we are relying
on \addition{the squared error to obtain an optimistic fit and a pessimistic fit.
This is somewhat related to posterior cross-validation, especially if we only consider
the optimistic fit. However, choosing the single best model may not be beneficial
in every situation, which we showed in the illustration with the first synthetic dataset.
As we propose this sensitivity analysis over the hyper parameter to obtain a cautious
variable selection routine, we also get some indeterminacy between the multiple model
which we discuss with a measure called indeterminacy.} These measures give us an
overview of the \addition{uncertainty as well as the goodness in model fitting. But, 
comparison with other methods can be difficult at times as we do not have a 
unified measure to capture goodness and model indeterminacy simultaneously. 
However, some may argue that in specific cases only the optimistic fit 
is enough for comparison.}

\appendix

\addition{\section{Orthogonal Design Case}\label{app:sec:odc}}

\begin{lemma}
Let,
\begin{equation}
    \beta_j \mid y
    \sim\frac{\alpha_jw_{1,j}}{W_j}
    \mathcal{N}\left(\hat{\beta}_{1,j}, \sigma_1^2\right)
    +\frac{(1-\alpha_j)w_{0,j}}{W_j}
    \mathcal{N}\left(\hat{\beta}_{0,j}, \sigma_0^2\right).
\end{equation}
Then, $\mathcal{N}\left(\hat{\beta}_{1,j}, \sigma_1^2\right)$ dominates the posterior if
\begin{align}
    \hat{\beta_j^2} & >
    \frac{\sigma^2}{n}\frac{(n\tau_1^2+1)(n\tau_0^2+1)}{n\tau_1^2-n\tau_0^2}
    \left[
    2\ln\left(\frac{1-\epsilon_1}{\epsilon_1}\right)
    +\ln\left(\frac{n\tau_1^2+1}{n\tau_0^2+1}\right)\right]
\end{align}
and $\mathcal{N}\left(\hat{\beta}_{0,j}, \sigma_0^2\right)$ 
dominates the posterior if, 
\begin{align}
    \hat{\beta_j^2} & <
    \frac{\sigma^2}{n}\frac{(n\tau_1^2+1)(n\tau_0^2+1)}{n\tau_1^2-n\tau_0^2}
    \left[
    2\ln\left(\frac{\epsilon_2}{1-\epsilon_2}\right)
    +\ln\left(\frac{n\tau_1^2+1}{n\tau_0^2+1}\right)\right].
\end{align}

\end{lemma}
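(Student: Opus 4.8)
The plan is to read ``domination'' of a mixture component in \cref{eq:post:beta:mixture} as the statement that its posterior weight exceeds the other's, uniformly over the elicitation set $\mathcal{P}_j = [\epsilon_1, 1-\epsilon_2]$. Concretely, $\mathcal{N}(\hat{\beta}_{1,j}, \sigma_1^2)$ dominates precisely when the weight ratio $\alpha_j w_{1,j}/((1-\alpha_j) w_{0,j})$ exceeds $1$ for every admissible $\alpha_j$, and $\mathcal{N}(\hat{\beta}_{0,j}, \sigma_0^2)$ dominates when this ratio stays below $1$ throughout. I would first invoke the monotonicity of the posterior odds in $\alpha_j$, already established in the discussion following \cref{eq:active:strong}, to collapse each uniform condition to a single endpoint: the slab dominates iff the infimum (attained at $\alpha_j = \epsilon_1$) exceeds one, i.e. $\frac{\epsilon_1}{1-\epsilon_1}\cdot\frac{w_{1,j}}{w_{0,j}} > 1$, and the spike dominates iff the supremum (attained at $\alpha_j = 1-\epsilon_2$) is below one, i.e. $\frac{1-\epsilon_2}{\epsilon_2}\cdot\frac{w_{1,j}}{w_{0,j}} < 1$. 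These are exactly the endpoint odds recorded in \cref{eq:non:active} and \cref{eq:active:strong}.

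Next I would substitute the explicit form of $w_{k,j}$ from \cref{lem:1} and compute the ratio in closed form. Writing out the Gaussian normalisers and exponents gives
\begin{equation}
    \frac{w_{1,j}}{w_{0,j}}
    = \sqrt{\frac{n\tau_0^2+1}{n\tau_1^2+1}}\,
    \exp\left(\frac{n^2\hat{\beta}_j^2(\tau_1^2-\tau_0^2)}
    {2\sigma^2(n\tau_0^2+1)(n\tau_1^2+1)}\right),
\end{equation}
where the exponent arises from combining the two terms via the identity $\frac{1}{n\tau_0^2+1} - \frac{1}{n\tau_1^2+1} = \frac{n(\tau_1^2-\tau_0^2)}{(n\tau_0^2+1)(n\tau_1^2+1)}$, and the prefactor is simply the quotient of the two normalising constants.

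Finally I would take logarithms of the two endpoint conditions, isolate $\hat{\beta}_j^2$, and simplify. For the slab, $\frac{w_{1,j}}{w_{0,j}} > \frac{1-\epsilon_1}{\epsilon_1}$ becomes, after taking logs and multiplying by $2$,
\begin{equation}
    \frac{n^2\hat{\beta}_j^2(\tau_1^2-\tau_0^2)}
    {\sigma^2(n\tau_0^2+1)(n\tau_1^2+1)}
    > 2\ln\left(\frac{1-\epsilon_1}{\epsilon_1}\right)
    + \ln\left(\frac{n\tau_1^2+1}{n\tau_0^2+1}\right),
\end{equation}
and dividing through by the positive factor $\frac{n^2(\tau_1^2-\tau_0^2)}{\sigma^2(n\tau_0^2+1)(n\tau_1^2+1)}$ yields the first stated threshold once one rewrites $n^2(\tau_1^2-\tau_0^2) = n(n\tau_1^2 - n\tau_0^2)$. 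The spike condition follows identically from $\frac{w_{1,j}}{w_{0,j}} < \frac{\epsilon_2}{1-\epsilon_2}$, producing the $2\ln(\epsilon_2/(1-\epsilon_2))$ term. I expect no genuine obstacle; the only point demanding care is the algebraic bookkeeping when combining the two exponentials and confirming that the inequality direction is preserved on dividing by $\tau_1^2-\tau_0^2 > 0$, which holds because $\tau_1 > 1 > \tau_0$ by construction.
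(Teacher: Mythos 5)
Your proposal is correct and follows essentially the same route as the paper's own proof in \cref{ap:lem:2}: both reduce the uniform domination condition to the endpoint odds $\frac{\epsilon_1}{1-\epsilon_1}\cdot\frac{w_{1,j}}{w_{0,j}}>1$ and $\frac{1-\epsilon_2}{\epsilon_2}\cdot\frac{w_{1,j}}{w_{0,j}}<1$ via the monotonicity of the posterior odds in $\alpha_j$, then substitute the explicit $w_{k,j}$ from \cref{lem:1}, take logarithms, combine the two exponents using the same fraction identity, and rearrange. Your closed form for $w_{1,j}/w_{0,j}$ and the intermediate inequality after multiplying by $2$ match the paper's displayed steps exactly, so there is nothing further to add.
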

\begin{proof}\label{ap:lem:2}
Exploiting the monotonicity property of the posterior odds, we can
say that $\mathcal{N}\left(\hat{\beta}_{1,j}, \sigma_1^2\right)$ dominates the 
posterior if $\frac{\epsilon_1}{(1-\epsilon_1)}\cdot \frac{w_{1,j}}{w_{0,j}}>1$.
That is, if
\begin{align}
    \exp\left(-\frac{n\hat{\beta_j^2}}{2(n\sigma^2\tau_1^2+\sigma^2)}
    +\frac{n\hat{\beta_j^2}}{2(n\sigma^2\tau_0^2+\sigma^2)}\right)
    &>\frac{(1-\epsilon_1)\sqrt{n\tau_1^2+1}}{\epsilon_1\sqrt{n\tau_0^2+1}}\\
    -\frac{n\hat{\beta_j^2}}{2(n\sigma^2\tau_1^2+\sigma^2)}
    +\frac{n\hat{\beta_j^2}}{2(n\sigma^2\tau_0^2+\sigma^2)}
    &>\ln\left(
    \frac{(1-\epsilon_1)\sqrt{n\tau_1^2+1}}{\epsilon_1\sqrt{n\tau_0^2+1}}\right)\\
    \frac{n\hat{\beta_j^2}}{2\sigma^2}
    \left[-\frac{1}{(n\tau_1^2+1)}+\frac{1}{(n\tau_0^2+1)}\right]
    &>\ln\left(
    \frac{(1-\epsilon_1)\sqrt{n\tau_1^2+1}}{\epsilon_1\sqrt{n\tau_0^2+1}}\right)\\
    \frac{n\hat{\beta_j^2}}{2\sigma^2}
    \frac{n\tau_1^2-n\tau_0^2}{(n\tau_1^2+1)(n\tau_0^2+1)}
    &>\ln\left(
    \frac{(1-\epsilon_1)\sqrt{n\tau_1^2+1}}{\epsilon_1\sqrt{n\tau_0^2+1}}\right).
\end{align}
Then after rearranging the terms on both sides, we get:
\begin{align}
    \hat{\beta_j^2} & >
    \frac{\sigma^2}{n}\frac{(n\tau_1^2+1)(n\tau_0^2+1)}{n\tau_1^2-n\tau_0^2}
    \left[
    2\ln\left(\frac{1-\epsilon_1}{\epsilon_1}\right)
    +\ln\left(\frac{n\tau_1^2+1}{n\tau_0^2+1}\right)\right].
\end{align}
Similarly we can prove the other inequality.
\end{proof}

\begin{lemma}\label{ap:lem:3}
Let $g:(0,1)\to \mathbb{R}$ be defined as 
\begin{equation}
    g(\alpha) \coloneqq \frac{a\alpha + b}{c\alpha + d}
\end{equation}
for some constants $a$, $b$, $c$, and $d\in\mathbb{R}$
such that $c\alpha+d>0\quad \forall \alpha \in (0,1)$.
Then $g$ is monotonically increasing when $ad-bc > 0$ and 
monotonically decreasing when $ad-bc<0$.
\end{lemma}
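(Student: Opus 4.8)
The plan is to establish the claim by a direct derivative computation, reading off the monotonicity from the sign of $g'$. Since the standing hypothesis guarantees $c\alpha + d > 0$ for all $\alpha \in (0,1)$, the map $g$ is a ratio of two affine (hence differentiable) functions whose denominator never vanishes on the open interval, so $g$ is differentiable on $(0,1)$ and the quotient rule applies.

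First I would apply the quotient rule to obtain
\[
g'(\alpha) = \frac{a(c\alpha + d) - c(a\alpha + b)}{(c\alpha + d)^2}.
\]
Expanding the numerator, the cross terms $ac\alpha$ cancel, leaving exactly $ad - bc$, so that
\[
g'(\alpha) = \frac{ad - bc}{(c\alpha + d)^2}.
\]
This is the computational heart of the argument, and it is entirely routine.

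The key observation is then that the denominator $(c\alpha + d)^2$ is strictly positive on $(0,1)$, again by the hypothesis $c\alpha + d > 0$. Consequently the sign of $g'(\alpha)$ is constant over the entire interval and agrees with the sign of the numerator $ad - bc$. When $ad - bc > 0$ we have $g'(\alpha) > 0$ throughout, so $g$ is (strictly, hence monotonically) increasing; when $ad - bc < 0$ we have $g'(\alpha) < 0$ throughout, so $g$ is monotonically decreasing. Invoking the standard fact that a function with positive (respectively negative) derivative on an interval is increasing (respectively decreasing) there completes the argument.

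There is essentially no genuine obstacle in this lemma; it is a one-line calculus fact about linear fractional transformations. The only point meriting a word of care is the nonvanishing and fixed sign of the denominator, which is precisely why the hypothesis $c\alpha + d > 0$ on $(0,1)$ is imposed, and this is exactly the condition satisfied by the posterior odds and posterior mean expressions of the form $\tfrac{a\alpha_j + b}{c\alpha_j + d}$ to which this lemma is applied earlier in the paper.
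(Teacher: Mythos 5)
Your proof is correct: the quotient-rule computation $g'(\alpha) = (ad-bc)/(c\alpha+d)^2$ together with the positivity of the denominator is exactly the routine argument the paper has in mind, which it omits as ``straightforward.'' Nothing further is needed.
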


\begin{proof}
The proof is straightforward and therefore we omit it.
\end{proof}
Note that, the posterior mean of $\beta_j$ can be written as:
\begin{align}
    E(\beta_j\mid y) 
    & = \frac{(w_{1,j}\hat{\beta}_{1,j} - w_{0,j}\hat{\beta}_{1,j})\alpha_j + w_{0,j}\hat{\beta}_{0,j}}
    {(w_{1,j}-w_{0,j})\alpha_j + w_{0,j}}.
\end{align}
Therefore, we have
\begin{align}
    \frac{d}{d\alpha_j}E(\beta_j\mid y) &= \frac{w_{0, j}w_{1, j}
    (\hat{\beta}_{1,j}-\hat{\beta}_{0,j})}{[(w_{1,j}-w_{0,j})\alpha_j + w_{0,j}]^2}\\
    &= \frac{w_{0, j}w_{1, j}}
    {[(w_{1,j}-w_{0,j})\alpha_j + w_{0,j}]^2}
    \left(\frac{n\tau_1^2\hat{\beta_j}}{n\tau_1^2 +1}
    -\frac{n\tau_0^2\hat{\beta_j}}{n\tau_0^2 +1}\right)\\
    &= \frac{w_{0, j}w_{1, j}\hat{\beta}_j}
    {[(w_{1,j}-w_{0,j})\alpha_j + w_{0,j}]^2}
    \left(\frac{n\tau_1^2}{n\tau_1^2 +1} - \frac{n\tau_0^2}{n\tau_0^2 +1}\right)\\
    &= \frac{w_{0, j}w_{1, j}}
    {[(w_{1,j}-w_{0,j})\alpha_j + w_{0,j}]^2}
    \frac{n\tau_1^2 - n\tau_0^2}{(n\tau_1^2 +1)(n\tau_0^2 +1)}\hat{\beta}_j.
\end{align}
Since $\tau_1>\tau_0$, the posterior mean 
of $\beta_j$ is monotonically increasing with respect to 
$\alpha_j$ for $\hat{\beta}_j > 0$ and monotonically decreasing with respect to
$\alpha_j$ for $\hat{\beta}_j < 0$.

\begin{lemma}\label{ap:lem:4}
Let 
\begin{equation}
    X\sim w_1 f_1 + w_2 f_2
\end{equation}
where $f_i$ denotes a normal density with mean
$\mu_i$ and variance $\sigma^2_i$ for $i=1,2$. Then,
\begin{equation}
    Var(X) = \sum_{i=1}^2 w_i(\sigma^2_i + \mu_i^2) - 
    \left(\sum_{i=1}^2 w_i\mu_i\right)^2
\end{equation}
\end{lemma}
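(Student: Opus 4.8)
The plan is to reduce everything to the identity $\operatorname{Var}(X)=E(X^2)-(E(X))^2$ and then to exploit the linearity of integration against the mixture density. Writing the density of $X$ as $p(x)=w_1 f_1(x)+w_2 f_2(x)$, where the weights are understood to satisfy $w_1+w_2=1$ so that $p$ is a genuine probability density, the key observation is that any moment of $X$ splits additively over the two components. This reduces the whole computation to knowing the first two moments of each individual normal density, which are standard.

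Concretely, I would first compute the mean. Since
\begin{equation}
    E(X)=\int x\,p(x)\,dx = w_1\int x f_1(x)\,dx + w_2\int x f_2(x)\,dx = w_1\mu_1 + w_2\mu_2 = \sum_{i=1}^2 w_i\mu_i,
\end{equation}
where I use that $\int x f_i(x)\,dx$ is simply the mean $\mu_i$ of the $i$-th normal component. Next I would compute the second moment in the same fashion,
\begin{equation}
    E(X^2)=\int x^2 p(x)\,dx = w_1\int x^2 f_1(x)\,dx + w_2\int x^2 f_2(x)\,dx = \sum_{i=1}^2 w_i\,E_{f_i}(X^2).
\end{equation}
Here the normality of each component enters only through the elementary relation $E_{f_i}(X^2)=\operatorname{Var}_{f_i}(X)+\bigl(E_{f_i}(X)\bigr)^2=\sigma_i^2+\mu_i^2$, giving $E(X^2)=\sum_{i=1}^2 w_i(\sigma_i^2+\mu_i^2)$.

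Combining the two displays through $\operatorname{Var}(X)=E(X^2)-(E(X))^2$ then yields the claimed formula directly. There is no genuine obstacle in this argument: the only point requiring a moment's care is the implicit normalisation $w_1+w_2=1$, which guarantees that $p$ integrates to one and hence that the moment splitting is valid; the Gaussian shape of $f_1$ and $f_2$ is used solely to evaluate each component's second moment as $\sigma_i^2+\mu_i^2$. I would remark that the same derivation extends verbatim to a mixture of any finite number of components with arbitrary (not necessarily normal) summands, provided each possesses finite mean and variance.
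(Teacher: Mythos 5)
Your proposal is correct and follows essentially the same route as the paper's own proof: compute $E(X^2)$ by splitting the integral over the mixture components, use $E_{f_i}(X^2)=\sigma_i^2+\mu_i^2$, and conclude via $\operatorname{Var}(X)=E(X^2)-(E(X))^2$. Your explicit remark about the normalisation $w_1+w_2=1$ and the extension to non-normal components is a small but accurate addition; nothing further is needed.
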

\begin{proof}
First, note that
\begin{align}
    E(X^2) &=\int x^2[w_1f_1(x) + w_2 f_2(x)]dx\\
    & = w_1\int x^2f_1(x)dx + w_2\int x^2f_2(x)dx\\
    & = w_1(\sigma^2_1 + \mu_1^2) + w_2(\sigma^2_2 + \mu_2^2).
\end{align}
Consequently, Then, the variance of $X$ is given by:
\begin{align}
    \text{Var}(X) &= E(X^2) - \left[E(X)\right]^2\\
    & = \sum_{i=1}^2w_i(\sigma^2_i + \mu_i^2) - \left(\sum_{i=1}^2 w_i\mu_i\right)^2.
\end{align}
\end{proof}
Now, we know that, 
\begin{align}
    \beta_j \mid y
    &\sim\frac{\alpha_jw_{1,j}}{W_j}
    \mathcal{N}\left(\hat{\beta}_{1,j}, \sigma_1^2\right)
    +\frac{(1-\alpha_j)w_{0,j}}{W_j}
    \mathcal{N}\left(\hat{\beta}_{0,j}, \sigma_0^2\right).
\end{align}
Then from above lemma, we can show that the variance of $\beta_j\mid y$ is given by:
\begin{align}
    \text{Var}(\beta_j\mid y)
    &= \frac{\alpha_jw_{1,j}\sigma^2_1 + (1-\alpha_j)w_{0,j}\sigma^2_0}{W_j}
    + \frac{\alpha(1-\alpha)w_{1,j}w_{0,j}(\hat{\beta}_{1,j} - \hat{\beta}_{0,j})^2}
    {W_j^2}.
\end{align}

\addition{

\section{Regression Coefficients}\label{app:sec:rc}

\begin{align}
    P(\beta\mid y)
    &\overset{\beta}{\propto}\int \int \sum_{\gamma} P(\beta, \gamma, \sigma^2, q \mid y) dq d\sigma^2\\
    &\overset{\beta}{\propto}\int P(y\mid \beta, \sigma^2)
    \sum_{\gamma}P(\beta \mid \gamma, \sigma^2)
    \left[\int P(\gamma\mid q) P(q) dq\right]
    P(\sigma^2) d \sigma^2\\
    &\overset{\beta}{\propto}\int P(y\mid \beta, \sigma^2)  
    \prod_{j}\left[\alpha_jf_1(\beta_j)+ (1-\alpha_j)f_0(\beta_j)\right]
    P(\sigma^2) d \sigma^2\label{eq:beta:joint:post}
\end{align}
To simplify the product term in the above expression, we first provide the following
identity:
\begin{lemma}\label{lem:prod:expand}
Let,
\begin{equation}
    f_{\gamma_j}(\beta_j) \coloneqq \frac{1}{\sqrt{2\pi\sigma^2\tau_1^{2\gamma_j}\tau_0^{2(1-\gamma_j)}}}
    \exp\left(-\frac{\beta_j^2}{2\sigma^2\tau_1^{2\gamma_j}\tau_0^{2(1-\gamma_j)}}\right).
\end{equation}
Then,
\begin{equation}
    \prod_{j}\left[\alpha_jf_1(\beta_j)+ (1-\alpha_j)f_0(\beta_j)\right] = 
    \frac{1}{\sqrt{(2\pi\sigma^2\tau_1^2)^p}}\exp\left(-\frac{\|\beta\|^2}{2\sigma^2\tau_1^2}\right)
    \left(\prod_{j}\alpha_j\right)\left(\sum_{k=0}^pg_{k}(\beta, \sigma^2)\right) ,
\end{equation}
where 
\begin{align}
    g_0(\beta, \sigma^2) &= 1,\\
    g_p(\beta, \sigma^2) &= \tfrac{\tau_1^{p}}{\tau_0^{p}}\exp\left(-\frac{\|\beta\|^2}{2\sigma^2}
    \left(\frac{1}{\tau_0^2} -\frac{1}{\tau_1^2}\right)\right)
    \prod_{j}\frac{1-\alpha_{j}}{\alpha_{j}},
    \intertext{and for $1\le k \le p-1$,}
    g_k(\beta, \sigma^2) &= \tfrac{\tau_1^{k}}{\tau_0^{k}}
    \sum_{j_1<\cdots<j_{k}}\tfrac{1-\alpha_{j_1}}{\alpha_{j_1}}\tfrac{1-\alpha_{j_2}}{\alpha_{j_2}}
    \cdots\tfrac{1-\alpha_{j_{k}}}{\alpha_{j_{k}}}
    \exp\left(-\tfrac{\beta_{j_1}^2+\beta_{j_2}^2+\cdots+\beta_{j_{k}}^2}{2\sigma^2}
    \left(\tfrac{1}{\tau_0^2} -\tfrac{1}{\tau_1^2}\right)\right).
\end{align}
\end{lemma}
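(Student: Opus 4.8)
The plan is to pull a common Gaussian factor out of each of the $p$ terms and then expand the resulting product multilinearly, organising the resulting summands by how many ``spike'' contributions they contain.

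First I would rewrite each factor by factoring out the slab contribution. Using the definitions of $f_1$ and $f_0$, dividing the $f_0$ term by the $f_1$ term and simplifying the ratio of normalising constants and exponentials gives
\begin{equation}
\alpha_j f_1(\beta_j) + (1-\alpha_j) f_0(\beta_j)
= \frac{\alpha_j}{\sqrt{2\pi\sigma^2\tau_1^2}}
\exp\left(-\frac{\beta_j^2}{2\sigma^2\tau_1^2}\right)
\left(1 + h_j\right),
\end{equation}
where
\begin{equation}
h_j \coloneqq \frac{1-\alpha_j}{\alpha_j}\cdot\frac{\tau_1}{\tau_0}
\exp\left(-\frac{\beta_j^2}{2\sigma^2}
\left(\frac{1}{\tau_0^2}-\frac{1}{\tau_1^2}\right)\right).
\end{equation}

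Next I would take the product over $j$ and split it into three pieces. The scalar prefactors combine into $\left(\prod_j \alpha_j\right)/\sqrt{(2\pi\sigma^2\tau_1^2)^p}$, the Gaussian factors combine into $\exp\left(-\|\beta\|^2/(2\sigma^2\tau_1^2)\right)$, and the remaining product is $\prod_j(1+h_j)$. This already reproduces the two explicit prefactors in the statement, so it remains only to identify $\prod_j(1+h_j)$ with $\sum_{k=0}^p g_k(\beta,\sigma^2)$.

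Finally I would use the standard multilinear expansion $\prod_j(1+h_j) = \sum_{S\subseteq\{1,\ldots,p\}}\prod_{j\in S}h_j$ and group the subsets $S$ by their cardinality $k$. For $S=\{j_1<\cdots<j_k\}$ the product $\prod_{j\in S}h_j$ yields the factor $(\tau_1/\tau_0)^k$, the product of odds ratios $\prod_{i}(1-\alpha_{j_i})/\alpha_{j_i}$, and an exponential in $\beta_{j_1}^2+\cdots+\beta_{j_k}^2$, which is exactly $g_k(\beta,\sigma^2)$; the boundary cases $k=0$ (empty product, giving $g_0=1$) and $k=p$ (the full index set) match the stated forms. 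The computation is entirely routine: the only care needed is in organising the subset sums so that their indexing aligns with the definition of $g_k$, and I do not anticipate any genuine obstacle beyond this bookkeeping.
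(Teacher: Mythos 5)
Your proposal is correct and follows essentially the same route as the paper: factor out $\alpha_j f_1(\beta_j)$ from each term to obtain $\prod_j(1+h_j)$ with $h_j = \frac{1-\alpha_j}{\alpha_j}\frac{\tau_1}{\tau_0}\exp\bigl(-\frac{\beta_j^2}{2\sigma^2}(\tau_0^{-2}-\tau_1^{-2})\bigr)$, then expand the product via the identity $\prod_j(1+a_j)=\sum_{S}\prod_{j\in S}a_j$ and group terms by the cardinality of $S$ to recover the $g_k$. The paper writes the expansion term by term rather than as a sum over subsets, but this is only a notational difference.
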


\begin{proof} \label{ap:lem:5}
From the left hand side, we have
\begin{align}
    &\prod_{j}\left[\alpha_jf_1(\beta_j)+ (1-\alpha_j)f_0(\beta_j)\right]\nonumber\\
    &=\prod_{j}\alpha_jf_1(\beta_j)\prod_{j}\left[1+ \frac{1-\alpha_j}{\alpha_j}
    \frac{f_0(\beta_j)}{f_1(\beta_j)}\right]\\
    &=\prod_{j}\frac{\alpha_j}{\sqrt{2\pi\sigma^2\tau_1^2}}
    \exp\left(-\frac{\beta_j^2}{2\sigma^2\tau_1^2}\right)\prod_{j}
    \left[1+ \frac{\tau_1(1-\alpha_j)}{\tau_0\alpha_j}
    \exp\left(-\frac{\beta_j^2}{2\sigma^2}\left(\frac{1}{\tau_0^2} -\frac{1}{\tau_1^2}\right)\right)\right]\\
    &=\frac{1}{\sqrt{(2\pi\sigma^2\tau_1^2)^p}}\exp\left(-\frac{\|\beta\|^2}{2\sigma^2\tau_1^2}\right)
    \prod_{j}\alpha_j\prod_{j}
    \left[1+ \frac{\tau_1(1-\alpha_j)}{\tau_0\alpha_j}
    \exp\left(-\frac{\beta_j^2}{2\sigma^2}\left(\frac{1}{\tau_0^2} -\frac{1}{\tau_1^2}\right)\right)\right]\label{eq:prod:mixture}
\end{align}
Now to evaluate the product term $\prod_{j}
    \left[1+ \frac{\tau_1(1-\alpha_j)}{\tau_0\alpha_j}
    \exp\left(-\frac{\beta_j^2}{2\sigma^2}\left(\frac{1}{\tau_0^2} -\frac{1}{\tau_1^2}\right)\right)\right]$, we use the following identity
\begin{align}
    &\prod_{j}(1+a_j) \nonumber\\
    &= 1 + \sum_{j}a_j + \sum_{j_1<j_2}a_{j_1}a_{j_2} + 
    \sum_{j_1<j_2<j_3}a_{j_1}a_{j_2}a_{j_3} + \cdots + \sum_{j_1<\cdots<j_{(p-1)}}(a_{j_1}
    \cdots a_{j_{(p-1)}})+ \prod_{j}a_j.\label{eq:identity:prod}
\end{align}
Then, combining \cref{eq:prod:mixture} and \cref{eq:identity:prod}, we have
\begin{align}
    &\prod_{j}    \left[1+ \frac{\tau_1(1-\alpha_j)}{\tau_0\alpha_j}
    \exp\left(-\frac{\beta_j^2}{2\sigma^2}\left(\frac{1}{\tau_0^2} 
    -\frac{1}{\tau_1^2}\right)\right)\right]\nonumber\\
    &=1 + \frac{\tau_1}{\tau_0}\sum_{j}\frac{1-\alpha_j}{\alpha_j}
    \exp\left(-\frac{\beta_j^2}{2\sigma^2}\left(\frac{1}{\tau_0^2} 
    -\frac{1}{\tau_1^2}\right)\right) \nonumber\\
    & + \frac{\tau_1^2}{\tau_0^2}
    \sum_{j_1<j_2}\frac{1-\alpha_{j_1}}{\alpha_{j_1}}\frac{1-\alpha_{j_2}}{\alpha_{j_2}}
    \exp\left(-\frac{\beta_{j_1}^2+\beta_{j_2}^2}{2\sigma^2}\left(\frac{1}{\tau_0^2} 
    -\frac{1}{\tau_1^2}\right)\right) \nonumber\\
    & + \frac{\tau_1^3}{\tau_0^3}
    \sum_{j_1<j_2<j_3}\frac{1-\alpha_{j_1}}{\alpha_{j_1}}\frac{1-\alpha_{j_2}}{\alpha_{j_2}}
    \frac{1-\alpha_{j_3}}{\alpha_{j_3}}
    \exp\left(-\frac{\beta_{j_1}^2+\beta_{j_2}^2+\beta_{j_3}^2}{2\sigma^2}\left(\frac{1}{\tau_0^2} 
    -\frac{1}{\tau_1^2}\right)\right) \nonumber\\
    & + \cdots\nonumber\\
    & + \frac{\tau_1^{(p-1)}}{\tau_0^{(p-1)}}
    \sum_{j_1<\cdots<j_{p-1}}\tfrac{1-\alpha_{j_1}}{\alpha_{j_1}}\tfrac{1-\alpha_{j_2}}{\alpha_{j_2}}
    \cdots\tfrac{1-\alpha_{j_{(p-1)}}}{\alpha_{j_{(p-1)}}}
    \exp\left(-\frac{\beta_{j_1}^2+\beta_{j_2}^2+\cdots+\beta_{j_{(p-1)}}^2}{2\sigma^2}\left(\frac{1}{\tau_0^2} 
    -\frac{1}{\tau_1^2}\right)\right)\nonumber\\
    & + \frac{\tau_1^{p}}{\tau_0^{p}}\exp\left(-\frac{\|\beta\|^2}{2\sigma^2}\left(\frac{1}{\tau_0^2} 
    -\frac{1}{\tau_1^2}\right)\right)
    \prod_{j}\frac{1-\alpha_{j}}{\alpha_{j}}.
\end{align}
Therefore,
\begin{equation}
    \prod_{j}\left[\alpha_jf_1(\beta_j)+ (1-\alpha_j)f_0(\beta_j)\right] = 
    \frac{1}{\sqrt{(2\pi\sigma^2\tau_1^2)^p}}\exp\left(-\frac{\|\beta\|^2}{2\sigma^2\tau_1^2}\right)
    \left(\prod_{j}\alpha_j\right)\left(\sum_{k=0}^pg_{k}(\beta, \sigma^2)\right) 
\end{equation}
where $g_k(\beta, \sigma^2)$ is as specified before.
\end{proof}

Now, using our result from \cref{lem:prod:expand} in \cref{eq:beta:joint:post}, we get
\begin{align}
    &P(\beta\mid y)\nonumber\\
    &\overset{\beta}{\propto}\int P(y\mid \beta, \sigma^2) 
    \frac{1}{\sqrt{(2\pi\sigma^2\tau_1^2)^p}}
    \exp\left(-\frac{\|\beta\|^2}{2\sigma^2\tau_1^2}\right)
    \prod_{j}\alpha_j\left(\sum_{k=0}^pg_{k}(\beta, \sigma^2)\right)
    P(\sigma^2)d\sigma^2\\
    &\overset{\beta}{\propto}\sum_{k=0}^p\int\frac{1}{\sigma^{2(n/2+p/2)}}
    \exp\left(-\frac{1}{\sigma^2}
    \left(\frac{\|y-\mathbf{x}\beta\|^2_2}{2}
    +\frac{\|\beta\|^2}{2\tau_1^2}\right)\right)
    g_k(\beta, \sigma^2)\frac{1}{\sigma^{2(a+1)}}
    \exp\left(-\frac{b}{\sigma^2}\right)d\sigma^2\\
    &\overset{\beta}{\propto} \sum_{k=0}^p\int\frac{1}{\sigma^{2(n/2+p/2 + a +1)}}
    \exp\left(-\frac{1}{\sigma^2}\left(\frac{\|y-\mathbf{x}\beta\|^2_2}{2}
    +\frac{\|\beta\|^2}{2\tau_1^2} + b\right)\right)g_k(\beta, \sigma^2)d\sigma^2
    \label{eq:beta:joint:post:before:t}
\end{align}
Before evaluating the integrals, we need to show some identities. For that, we first
need to define some expressions. Let, $D_{\tau_1} = \tau_1^{-2}\mathbf{I}_{p}$,
then we define
\begin{align}
    L_{\tau_1} = (\mathbf{x}^T\mathbf{x} + D_{\tau_1})^{-1}, 
    \mu_{\tau_1} = L_{\tau_1}\mathbf{x}^Ty, 
    r_{\tau_1} = \frac{y^Ty - y^T\mathbf{x}L_{\tau_1}\mathbf{x}^Ty}{2} +b, \text{ and }
    \Sigma^{-1}_{\tau_1} = \frac{n+2a}{2r_{\tau_1}}L_{\tau_1}^{-1}.
\end{align}
We also use similar expressions using $D_{\tau_0}$ and $D_{j_1,j_2,\cdots,j_k}$
where
\begin{equation}
    D_{\tau_0} = \tau_0^{-2}\mathbf{I}_{p} \text{ and }
    D_{j_1,j_2,\cdots,j_k} = \text{diag}((1-\mathbb{I}_{j_1,j_2,\cdots,j_k}(j))\tau_1^{-2} 
    +\mathbb{I}_{j_1,j_2,\cdots,j_k}(j)\tau_0^{-2}).
\end{equation}

\begin{lemma}\label{lem:int:t:dist}
Let $a^*/2 = n/2+a$. Then,
\begin{equation}
    \int\frac{1}{\sigma^{2(n/2+p/2 + a +1)}}
    \exp\left(-\frac{1}{\sigma^2}\left(\frac{\|y-\mathbf{x}\beta\|^2_2}{2}
    +\frac{\|\beta\|^2}{2\tau_1^2} + b\right)\right)g_k(\beta, \sigma^2)d\sigma^2\\
    = \Gamma(a^*/2)(a^*\pi)^{p/2}h_k(\beta)
\end{equation}
where,
\begin{align}
    h_0(\beta) &= \frac{\sqrt{|\Sigma_{\tau_1}|}}
    {r_{\tau_1}^{a^*/2 + p/2}}
    \mathcal{T}_{a^*}(\mu_{\tau_1}, \Sigma_{\tau_1})\\
    h_p(\beta) &= \frac{\tau_1^{p}}{\tau_0^{p}}\prod_{j}\frac{1-\alpha_{j}}{\alpha_{j}}
    \frac{\sqrt{|\Sigma_{\tau_0}|}}
    {r_{\tau_0}^{a^*/2 + p/2}}
    \mathcal{T}_{a^*}(\mu_{\tau_0}, \Sigma_{\tau_0})
    \intertext{and for $1\le k \le p-1$}
    h_k(\beta) & = \frac{\tau_1^{k}}{\tau_0^{k}}
    \sum_{j_1<\cdots<j_{k}}\frac{1-\alpha_{j_1}}{\alpha_{j_1}}\frac{1-\alpha_{j_2}}{\alpha_{j_2}}
    \cdots\frac{1-\alpha_{j_{k}}}{\alpha_{j_{k}}}
    \frac{\sqrt{|\Sigma_{j_1,j_2,\cdots,j_k}|}}
    {r_{j_1,j_2,\cdots,j_k}^{a^*/2 + p/2}}
    \mathcal{T}_{a^*}(\mu_{j_1,j_2,\cdots,j_k}, \Sigma_{j_1,j_2,\cdots,j_k})
\end{align}
\end{lemma}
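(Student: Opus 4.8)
The plan is to use linearity of the integral to reduce to a single generic index set, carry out the $\sigma^2$-integration against the inverse-gamma kernel, and then recognise the outcome as the normalising constant of a multivariate $t$-density. Since $g_k$ is a finite sum over index sets $\{j_1<\cdots<j_k\}$, each contributing a scalar factor $\frac{\tau_1^k}{\tau_0^k}\prod_i \frac{1-\alpha_{j_i}}{\alpha_{j_i}}$ times an exponential $\exp\left(-\frac{\beta_{j_1}^2+\cdots+\beta_{j_k}^2}{2\sigma^2}\left(\frac{1}{\tau_0^2}-\frac{1}{\tau_1^2}\right)\right)$, I would fix one such set $S=\{j_1,\dots,j_k\}$, evaluate the corresponding integral, and sum over $S$ at the end to recover $h_k(\beta)$.

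First I would combine this subset exponential with the $\exp\left(-\frac{\|\beta\|^2}{2\sigma^2\tau_1^2}\right)$ factor already present in the integrand. The coordinates outside $S$ retain precision $\tau_1^{-2}$, whereas those inside $S$ acquire precision $\tau_1^{-2}+\left(\tau_0^{-2}-\tau_1^{-2}\right)=\tau_0^{-2}$, so the two Gaussian-in-$\beta$ factors merge into $\exp\left(-\frac{\beta^T D_S\beta}{2\sigma^2}\right)$ with $D_S=D_{j_1,\dots,j_k}$. Combining with the likelihood term and completing the square in $\beta$ (as in \cref{lem:int:beta}) gives
\begin{equation*}
  \|y-\mathbf{x}\beta\|_2^2 + \beta^T D_S\beta
  = (\beta-\mu_S)^T L_S^{-1}(\beta-\mu_S) + \left(y^Ty - y^T\mathbf{x}L_S\mathbf{x}^Ty\right),
\end{equation*}
where $L_S=(\mathbf{x}^T\mathbf{x}+D_S)^{-1}$ and $\mu_S=L_S\mathbf{x}^Ty$, so the whole exponent collapses to $-\frac{1}{\sigma^2}\left(\frac{(\beta-\mu_S)^T L_S^{-1}(\beta-\mu_S)}{2}+r_S\right)$ with $r_S$ as defined.

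Then I would integrate over $\sigma^2$ using the elementary identity $\int_0^\infty \sigma^{-2m}\exp\left(-A/\sigma^2\right)\,d\sigma^2=\Gamma(m-1)A^{-(m-1)}$ with $m=n/2+p/2+a+1$, so that $m-1=a^*/2+p/2$; this produces $\Gamma(a^*/2+p/2)$ divided by $\left(\frac{(\beta-\mu_S)^T L_S^{-1}(\beta-\mu_S)}{2}+r_S\right)^{a^*/2+p/2}$. The final step is to rewrite this as a $t$-density: substituting $\Sigma_S^{-1}=\frac{a^*}{2r_S}L_S^{-1}$ shows $1+\frac{(\beta-\mu_S)^T\Sigma_S^{-1}(\beta-\mu_S)}{a^*}=\frac{(\beta-\mu_S)^T L_S^{-1}(\beta-\mu_S)+2r_S}{2r_S}$, so the tail factor is exactly the $\mathcal{T}_{a^*}(\mu_S,\Sigma_S)$ kernel up to its normalising constant. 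Reinserting that normalisation, together with $|\Sigma_S|=\left(2r_S/a^*\right)^p|L_S|$, yields the claimed $\frac{\sqrt{|\Sigma_S|}}{r_S^{a^*/2+p/2}}\mathcal{T}_{a^*}(\mu_S,\Sigma_S)$ and the uniform prefactor $\Gamma(a^*/2)(a^*\pi)^{p/2}$.

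The main obstacle is precisely this reconciliation of constants: the $\sigma^2$-integral naturally produces $\Gamma(a^*/2+p/2)=\Gamma((a^*+p)/2)$, whereas the statement carries $\Gamma(a^*/2)$. The gap is closed by the normalising constant of the $p$-dimensional $t_{a^*}$ density, whose numerator is $\Gamma((a^*+p)/2)$ and whose denominator contains $\Gamma(a^*/2)(a^*\pi)^{p/2}|\Sigma_S|^{1/2}$; tracking these factors, along with the powers of $2$ and $r_S$ hidden in $|\Sigma_S|$, is the only genuinely delicate bookkeeping. It must be carried out once for the generic term and then checked against the boundary cases $h_0$ ($S=\varnothing$, $D_S=D_{\tau_1}$) and $h_p$ ($S=\{1,\dots,p\}$, $D_S=D_{\tau_0}$).
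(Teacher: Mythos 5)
Your proposal is correct and follows essentially the same route as the paper's proof: absorb the exponential in $g_k$ into the Gaussian prior term to form the precision matrix $D_{j_1,\dots,j_k}$, complete the square in $\beta$, integrate out $\sigma^2$ against the inverse-gamma kernel to obtain $\Gamma(a^*/2+p/2)$ over the bracket raised to $a^*/2+p/2$, and reinsert the multivariate $t$ normalising constant with $\Sigma^{-1}=\frac{a^*}{2r}L^{-1}$. The only (harmless) difference is presentational: you treat a single generic index set $S$ and recover $h_0,h_p$ as the boundary cases $S=\varnothing$ and $S=\{1,\dots,p\}$, whereas the paper writes out the three cases $k=0$, $k=p$, and $1\le k\le p-1$ separately, and you complete the square before the $\sigma^2$-integration rather than after.
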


\begin{proof} \label{ap:lem:6}
We compute the integrals using the properties of inverse gamma
distribution followed by some adjustments to obtain the expression
of multivariate t distribution.

For $k=0$, we have
\begin{align}
    &\int\frac{1}{\sigma^{2(n/2+p/2 + a +1)}}
    \exp\left(-\frac{1}{\sigma^2}\left(\frac{\|y-\mathbf{x}\beta\|^2_2}{2}
    +\frac{\|\beta\|^2}{2\tau_1^2} + b\right)\right)g_0(\beta, \sigma^2)d\sigma^2\nonumber\\
    & = \int\frac{1}{\sigma^{2(a^*/2 + p/2 +1)}}
    \exp\left(-\frac{1}{\sigma^2}\left(\frac{\|y-\mathbf{x}\beta\|^2_2}{2}
    +\frac{\|\beta\|^2}{2\tau_1^2} + b\right)\right)d\sigma^2\\
    &= \frac{\Gamma(a^*/2+p/2)}{\left(\frac{\|y-\mathbf{x}\beta\|^2_2}{2}
    +\frac{\|\beta\|^2}{2\tau_1^2} +b\right)^{a^*/2 + p/2}}\\
    &= \frac{\Gamma(a^*/2+p/2)}{\left(\frac{\|y-\mathbf{x}\beta\|^2_2}{2}
    +\frac{\beta^t D_{\tau_1}\beta}{2} +b\right)^{a^*/2 + p/2}}\\
    &= \frac{\Gamma(a^*/2+p/2)}{\left(\frac{(\beta-\mu_{\tau_1})^TL_{\tau_1}^{-1}
    (\beta-\mu_{\tau_1})}{2}
    +r_{\tau_1}\right)^{a^*/2 + p/2}}\\
    &= \frac{\Gamma(a^*/2+p/2)}{r_{\tau_1}^{a^*/2 + p/2}
    \left(\frac{1}{a^*}\frac{(\beta-\mu_{\tau_1})^TL_{\tau_1}^{-1}
    (\beta-\mu_{\tau_1})}{2r_{\tau_1}/a^*}
    +1\right)^{a^*/2 + p/2}}\\
    &= \frac{\Gamma(a^*/2+p/2)}{r_{\tau_1}^{a^*/2 + p/2}
    \left(1+\frac{1}{a^*}(\beta-\mu_{\tau_1})^T 
    \Sigma^{-1}_{\tau_1}(\beta-\mu_{\tau_1})\right)^{a^*/2 + p/2}}\\
    &= \tfrac{\Gamma(a^*/2)(a^*\pi)^{p/2}\sqrt{|\Sigma_{\tau_1}|}}
    {r_{\tau_1}^{a^*/2 + p/2}}
    \tfrac{\Gamma(a^*/2+p/2)}{\Gamma(a^*/2)(a^*\pi)^{p/2}\sqrt{|\Sigma_{\tau_1}|}
    \left(1+\frac{1}{a^*}(\beta-\mu_{\tau_1})^T 
    \Sigma^{-1}_{\tau_1}(\beta-\mu_{\tau_1})\right)^{a^*/2 + p/2}}\\
    &= \Gamma(a^*/2)(a^*\pi)^{p/2}\frac{\sqrt{|\Sigma_{\tau_1}|}}
    {r_{\tau_1}^{a^*/2 + p/2}}
    \mathcal{T}_{a^*}(\mu_{\tau_1}, \Sigma_{\tau_1})\\
    & = \Gamma(a^*/2)(a^*\pi)^{p/2} h_0(\beta).
\end{align}

For $k=p$, we have
\begin{align}
    &\int\frac{1}{\sigma^{2(n/2+p/2 + a +1)}}
    \exp\left(-\frac{1}{\sigma^2}\left(\frac{\|y-\mathbf{x}\beta\|^2_2}{2}
    +\frac{\|\beta\|^2}{2\tau_1^2} + b\right)\right)g_p (\beta,\sigma^2)d\sigma^2\nonumber\\
    & = \int\frac{1}{\sigma^{2(a^*/2 + p/2 +1)}}
    \exp\left(-\frac{1}{\sigma^2}\left(\frac{\|y-\mathbf{x}\beta\|^2_2}{2}
    +\frac{\|\beta\|^2}{2\tau_1^2} + b\right)\right)
    \\*\nonumber&\qquad\qquad\frac{\tau_1^{p}}{\tau_0^{p}}\exp\left(-\frac{\|\beta\|^2}{2\sigma^2}\left(\frac{1}{\tau_0^2} 
    -\frac{1}{\tau_1^2}\right)\right)
    \prod_{j}\frac{1-\alpha_{j}}{\alpha_{j}}d\sigma^2\\
    & = \frac{\tau_1^{p}}{\tau_0^{p}}\prod_{j}\frac{1-\alpha_{j}}{\alpha_{j}}
    \int\frac{1}{\sigma^{2(a^*/2 + p/2 +1)}}
    \exp\left(-\frac{1}{\sigma^2}\left(\frac{\|y-\mathbf{x}\beta\|^2_2}{2}
    +\frac{\|\beta\|^2}{2\tau_0^2} +b \right)\right)\\
    & = \frac{\tau_1^{p}}{\tau_0^{p}}\prod_{j}\frac{1-\alpha_{j}}{\alpha_{j}}
    \frac{\Gamma(a^*/2+p/2)}{\left(\frac{\|y-\mathbf{x}\beta\|^2_2}{2}
    +\frac{\|\beta\|^2}{2\tau_0^2}+b\right)^{a^*/2 + p/2}}\\
    & = \frac{\tau_1^{p}}{\tau_0^{p}}\prod_{j}\frac{1-\alpha_{j}}{\alpha_{j}}
    \frac{\Gamma(a^*/2+p/2)}{\left(\frac{\|y-\mathbf{x}\beta\|^2_2}{2}
    +\frac{\beta^T D_{\tau_0}\beta}{2}+b\right)^{a^*/2 + p/2}}\\
    & = \frac{\tau_1^{p}}{\tau_0^{p}}\prod_{j}\frac{1-\alpha_{j}}{\alpha_{j}}
    \frac{\Gamma(a^*/2+p/2)}{r_{\tau_0}^{a^*/2 + p/2}
    \left(1+\frac{1}{a^*}(\beta-\mu_{\tau_0})^T 
    \Sigma^{-1}_{\tau_0}(\beta-\mu_{\tau_0})\right)^{a^*/2 + p/2}}\\
    & = \frac{\tau_1^{p}}{\tau_0^{p}}\prod_{j}\frac{1-\alpha_{j}}{\alpha_{j}}
    \tfrac{\Gamma(a^*/2)(a^*\pi)^{p/2}\sqrt{|\Sigma_{\tau_0}|}}
    {r_{\tau_0}^{a^*/2 + p/2}}
    \tfrac{\Gamma(a^*/2+p/2)}{\Gamma(a^*/2)(a^*\pi)^{p/2}\sqrt{|\Sigma_{\tau_0}|}
    \left(1+\frac{1}{a^*}(\beta-\mu_{\tau_0})^T 
    \Sigma^{-1}_{\tau_0}(\beta-\mu_{\tau_0})\right)^{a^*/2 + p/2}}\\
    &= \Gamma(a^*/2)(a^*\pi)^{p/2}\frac{\tau_1^{p}}{\tau_0^{p}}\prod_{j}\frac{1-\alpha_{j}}{\alpha_{j}}
    \frac{\sqrt{|\Sigma_{\tau_0}|}}
    {r_{\tau_0}^{a^*/2 + p/2}}
    \mathcal{T}_{a^*}(\mu_{\tau_0}, \Sigma_{\tau_0})\\
    & = \Gamma(a^*/2)(a^*\pi)^{p/2} h_p(\beta).
\end{align}

For $1\le k \le p-1$, we have 
\begin{align}
    & \int\frac{1}{\sigma^{2(n/2+p/2 + a +1)}}
    \exp\left(-\frac{1}{\sigma^2}\left(\frac{\|y-\mathbf{x}\beta\|^2_2}{2}
    +\frac{\|\beta\|^2}{2\tau_1^2} + b\right)\right)g_k(\beta, \sigma^2)d\sigma^2\nonumber\\
    & = \int\frac{1}{\sigma^{2(a^*/2+p/2 +1)}}
    \exp\left(-\frac{1}{\sigma^2}\left(\frac{\|y-\mathbf{x}\beta\|^2_2}{2}
    +\frac{\|\beta\|^2}{2\tau_1^2} + b\right)\right)\nonumber\\*
    &\qquad\qquad\tfrac{\tau_1^{k}}{\tau_0^{k}}
    \sum_{j_1<\cdots<j_{k}}\tfrac{1-\alpha_{j_1}}{\alpha_{j_1}}\tfrac{1-\alpha_{j_2}}{\alpha_{j_2}}
    \cdots\tfrac{1-\alpha_{j_{k}}}{\alpha_{j_{k}}}
    \exp\left(-\tfrac{\beta_{j_1}^2+\beta_{j_2}^2+\cdots+\beta_{j_{k}}^2}{2\sigma^2}\left(\tfrac{1}{\tau_0^2} 
    -\tfrac{1}{\tau_1^2}\right)\right)d\sigma^2\\
    &= \frac{\tau_1^{k}}{\tau_0^{k}}
    \sum_{j_1<\cdots<j_{k}}\frac{1-\alpha_{j_1}}{\alpha_{j_1}}\frac{1-\alpha_{j_2}}{\alpha_{j_2}}
    \cdots\frac{1-\alpha_{j_{k}}}{\alpha_{j_{k}}}\nonumber\\*
    & \qquad\qquad\int\frac{1}{\sigma^{2(a^*/2 + p/2 +1)}}
    \exp\left(-\frac{1}{\sigma^2}\left(\frac{\|y-\mathbf{x}\beta\|^2_2}{2}
    +\frac{\|\beta\|^2}{2\tau_1^2}+b\right)\right)
    \nonumber\\*&\qquad\qquad\qquad\qquad\exp\left(-\frac{\beta_{j_1}^2+\beta_{j_2}^2+\cdots+\beta_{j_{k}}^2}{2\sigma^2}\left(\frac{1}{\tau_0^2} 
    -\frac{1}{\tau_1^2}\right)\right)d\sigma^2\\
    &= \frac{\tau_1^{k}}{\tau_0^{k}}
    \sum_{j_1<\cdots<j_{k}}\frac{1-\alpha_{j_1}}{\alpha_{j_1}}\frac{1-\alpha_{j_2}}{\alpha_{j_2}}
    \cdots\frac{1-\alpha_{j_{k}}}{\alpha_{j_{k}}}\nonumber\\*
    &\qquad\qquad \int\frac{1}{\sigma^{2(a^*/2 + p/2 +1)}}
    \exp\left(-\frac{1}{\sigma^2}\left(\frac{\|y-\mathbf{x}\beta\|^2_2}{2}
    +\frac{\beta^T D_{j_1,j_2,\cdots,j_k}\beta}{2} + b\right)\right)d\sigma^2\\
    &= \frac{\tau_1^{k}}{\tau_0^{k}}
    \sum_{j_1<\cdots<j_{k}}\frac{1-\alpha_{j_1}}{\alpha_{j_1}}\frac{1-\alpha_{j_2}}{\alpha_{j_2}}
    \cdots\frac{1-\alpha_{j_{k}}}{\alpha_{j_{k}}}
    \frac{\Gamma(a^*/2 + p/2)}{
    \left(\frac{\|y-\mathbf{x}\beta\|^2_2}{2}
    +\frac{\beta^T D_{j_1,j_2,\cdots,j_k}\beta}{2} + b\right)^{a^*/2+p/2}}\\
    &= \frac{\tau_1^{k}}{\tau_0^{k}}
    \sum_{j_1<\cdots<j_{k}}\frac{1-\alpha_{j_1}}{\alpha_{j_1}}\frac{1-\alpha_{j_2}}{\alpha_{j_2}}
    \cdots\frac{1-\alpha_{j_{k}}}{\alpha_{j_{k}}}
    \frac{\Gamma(a^*/2)(a^*\pi)^{p/2}\sqrt{|\Sigma_{j_1,j_2,\cdots,j_k}|}}
    {r_{j_1,j_2,\cdots,j_k}^{a^*/2 + p/2}}\nonumber\\*
    &
    \qquad\qquad\tfrac{\Gamma(a^*/2+p/2)}{\Gamma(a^*/2)(a^*\pi)^{p/2}
    \sqrt{|\Sigma_{j_1,j_2,\cdots,j_k}|}
    \left(1+\frac{1}{a^*}(\beta-\mu_{j_1,j_2,\cdots,j_k})^T 
    \Sigma^{-1}_{j_1,j_2,\cdots,j_k}(\beta-\mu_{j_1,j_2,\cdots,j_k})\right)^{a^*/2 + p/2}}\\
    &= \Gamma(a^*/2)(a^*\pi)^{p/2}\frac{\tau_1^{k}}{\tau_0^{k}}
    \nonumber\\*&\qquad\qquad\sum_{j_1<\cdots<j_{k}}\tfrac{1-\alpha_{j_1}}{\alpha_{j_1}}\tfrac{1-\alpha_{j_2}}{\alpha_{j_2}}
    \cdots\tfrac{1-\alpha_{j_{k}}}{\alpha_{j_{k}}}
    \tfrac{\sqrt{|\Sigma_{j_1,j_2,\cdots,j_k}|}}
    {r_{j_1,j_2,\cdots,j_k}^{a^*/2 + p/2}}
    \mathcal{T}_{a^*}(\mu_{j_1,j_2,\cdots,j_k}, \Sigma_{j_1,j_2,\cdots,j_k})\\
    &=\Gamma(a^*/2)(a^*\pi)^{p/2} h_k(\beta).
\end{align}
\end{proof}

Then, using identities from \cref{lem:int:t:dist},
we have
\begin{align}
    P(\beta \mid y) &\overset{\beta}{\propto} 
    \Gamma(a^*/2)(a^*\pi)^{p/2}\sum_{k=0}^p h_k(\beta)\\
    &\overset{\beta}{\propto} \sum_{k=0}^p h_k(\beta)
\end{align}
Now, for $1\le k \le p-1$, we can rewrite $h(k)$ so that
\begin{equation}
    h(k) = \sum_{\substack{\gamma \\ (\gamma_{j_1}=\cdots=\gamma_{j_{k}}=0)}}
    \left(\frac{\tau_1}{\tau_0}\right)^{p-\sum_{j}\gamma_j}
    \prod_{j}\left(\frac{1-\alpha_{j}}{\alpha_{j}}\right)^{1-\gamma_j}
    \frac{\sqrt{|\Sigma_{\gamma}|}}
    {r_{\gamma}^{a^*/2 + p/2}}
    \mathcal{T}_{a^*}(\mu_{\gamma}, \Sigma_{\gamma}),
\end{equation}
where $r_{\gamma} = \frac{y^Ty - y^T\mathbf{x}L_{\gamma}\mathbf{x}^Ty}{2} + b$
and $\Sigma_{\gamma} = \frac{a^*}{2r_{\gamma}}L^{-1}_{\gamma}$. Therefore,
\begin{align}
    P(\beta \mid y) 
    &\overset{\beta}{\propto} \sum_{\gamma}\left(
    \left(\frac{\tau_1}{\tau_0}\right)^{p - \sum_{j}\gamma_j}
    \prod_{j}\left(\frac{1-\alpha_j}{\alpha_j}\right)^{1-\gamma_j}
    \frac{\sqrt{|\Sigma_{\gamma}|}}{r^{a^*/2+p/2}_{\gamma}}
    \mathcal{T}_{a^*}(\mu_{\gamma}, \Sigma_{\gamma})
    \right)\\
    & = \frac{\sum_{\gamma}\left(
    \left(\frac{\tau_1}{\tau_0}\right)^{p - \sum_{j}\gamma_j}
    \prod_{j}\left(\frac{1-\alpha_j}{\alpha_j}\right)^{1-\gamma_j}
    \frac{\sqrt{|\Sigma_{\gamma}|}}{r^{a^*/2+p/2}_{\gamma}}
    \mathcal{T}_{a^*}(\mu_{\gamma}, \Sigma_{\gamma})
    \right)}
    {\sum_{\gamma}\left(
    \left(\frac{\tau_1}{\tau_0}\right)^{p - \sum_{j}\gamma_j}
    \prod_{j}\left(\frac{1-\alpha_j}{\alpha_j}\right)^{1-\gamma_j}
    \frac{\sqrt{|\Sigma_{\gamma}|}}{r^{a^*/2+p/2}_{\gamma}}\right)}\label{eq:joint:post:beta:gen}.
\end{align}
This shows that joint posterior of $\beta$ can be represented as a $2^p$ component
mixture of multivariate t-distribution, where each component corresponds
to a particular combination of selected variables out of $2^p$ possible combinations.

Now, by simplifying \cref{eq:joint:post:beta:gen}, we have the joint posterior of $\beta$
\begin{align}
    P(\beta \mid y)
    & = \frac{\sum_{\gamma}\left(
    \left(\frac{\tau_1}{\tau_0}\right)^{p - \sum_{j}\gamma_j}
    \prod_{j}\left(\frac{1-\alpha_j}{\alpha_j}\right)^{1-\gamma_j}
    \frac{\sqrt{|\Sigma_{\gamma}|}}{r^{a^*/2+p/2}_{\gamma}}
    \mathcal{T}_{a^*}(\mu_{\gamma}, \Sigma_{\gamma})
    \right)}
    {\sum_{\gamma}\left(
    \left(\frac{\tau_1}{\tau_0}\right)^{p - \sum_{j}\gamma_j}
    \prod_{j}\left(\frac{1-\alpha_j}{\alpha_j}\right)^{1-\gamma_j}
    \frac{\sqrt{|\Sigma_{\gamma}|}}{r^{a^*/2+p/2}_{\gamma}}\right)}\\
    & = \frac{\sum_{\gamma}\left(
    \left(\frac{\tau_1}{\tau_0}\right)^{p - \sum_{j}\gamma_j}
    \prod_{j}\left(\frac{1-\alpha_j}{\alpha_j}\right)^{1-\gamma_j}
    \frac{2^{p/2}r_{\gamma}^{p/2} (a^*)^{-p/2}
    \sqrt{|L_{\gamma}|}}{r^{a^*/2+p/2}_{\gamma}}
    \mathcal{T}_{a^*}(\mu_{\gamma}, \Sigma_{\gamma})
    \right)}
    {\sum_{\gamma}\left(
    \left(\frac{\tau_1}{\tau_0}\right)^{p - \sum_{j}\gamma_j}
    \prod_{j}\left(\frac{1-\alpha_j}{\alpha_j}\right)^{1-\gamma_j}
    \frac{2^{p/2}r_{\gamma}^{p/2} (a^*)^{-p/2}
    \sqrt{|L_{\gamma}|}}{r^{a^*/2+p/2}_{\gamma}}\right)}\\
    & = \frac{\sum_{\gamma}\left(
    \left(\frac{\tau_1}{\tau_0}\right)^{p - \sum_{j}\gamma_j}
    \prod_{j}\left(\frac{1-\alpha_j}{\alpha_j}\right)^{1-\gamma_j}
    \frac{\sqrt{|L_{\gamma}|}}{r^{a^*/2}_{\gamma}}
    \mathcal{T}_{a^*}(\mu_{\gamma}, \Sigma_{\gamma})
    \right)}
    {\sum_{\gamma}\left(
    \left(\frac{\tau_1}{\tau_0}\right)^{p - \sum_{j}\gamma_j}
    \prod_{j}\left(\frac{1-\alpha_j}{\alpha_j}\right)^{1-\gamma_j}
    \frac{\sqrt{|L_{\gamma}|}}{r^{a^*/2}_{\gamma}}\right)}\\
    &=\frac{\sum_{\gamma}\left(
    \left(\prod_{j} \alpha_j^{\gamma_j}
    (1-\alpha_j)^{1-\gamma_j}\right)\left(\frac{\sqrt{| L_{\gamma}|}}
    {\tau_1^{\sum\gamma_j} \tau_0^{(p-\sum\gamma_j)}}\right) 
    \frac{1}{\left(b + \frac{y^Ty - {\mu_{\gamma}}^T L_{\gamma}^{-1} \mu_{\gamma}}{2}\right)^{n/2+a}}
    \mathcal{T}_{a^*}(\mu_{\gamma}, \Sigma_{\gamma})
    \right)}
    {\sum_{\gamma}\left(
    \left(\prod_{j} \alpha_j^{\gamma_j}
    (1-\alpha_j)^{1-\gamma_j}\right)\left(\frac{\sqrt{| L_{\gamma}|}}
    {\tau_1^{\sum\gamma_j} \tau_0^{(p-\sum\gamma_j)}}\right) 
    \frac{1}{\left(b + \frac{y^Ty - {\mu_{\gamma}}^T L_{\gamma}^{-1} \mu_{\gamma}}{2}\right)^{n/2+a}}\right)}.
\end{align}

}

\bibliographystyle{plainnat}
\bibliography{basu20c}

\end{document}